\theoremstyle{definition}}
\theoremstyle{definition}
\newtheorem{prop}{Proposition}
\newcommand{\bu}{\bold{u}}
\newcommand{\by}{\bold{y}}
\newcommand{\bx}{\bold{x}}
\newcommand{\bY}{\bold{Y}}
\newcommand{\bX}{\bold{X}}
\newcommand{\bH}{\bold{H}}
\newcommand{\px}{\pmb{x}}
\newcommand{\py}{\pmb{y}}
\newcommand{\pz}{\pmb{z}}
\newcommand{\pu}{\pmb{u}}
\newcommand{\bU}{\bold{U}}
\newcommand{\pbm}{\pmb{m}}
\newcommand{\bq}{\bold{q}}
\newcommand{\bp}{\bold{p}}
\newcommand{\bz}{\bold{z}}
\newcommand{\bm}{\bold{m}}
\newcommand{\argmin}{\operatornamewithlimits{argmin}}
\newcommand{\bmu}{\boldsymbol{\mu}}
\newcommand{\bgamma}{\boldsymbol{\gamma}}
\newcommand{\bbeta}{\boldsymbol{\beta}}
\newcommand{\bveps}{\boldsymbol{\varepsilon}}
\algnewcommand\Input{\item[\textbf{Input:}]}
\algnewcommand\Output{\item[\textbf{Output:}]}
\algnewcommand\Initialize{\item[\textbf{Initialize :}]}
\title{Robust Extrinsic Regression Analysis for Manifold Valued Data}
\author{
  Hwiyoung Lee\\
  \texttt{hwiyoung.lee@stat.fsu.edu}
}
\date{Department of Statistics, Florida State University\\ January 28, 2021}
\begin{document}

\maketitle

\begin{abstract}
\onehalfspacing
\selectfont
Recently, there has been a growing need in analyzing data on manifolds owing to their important role in diverse fields of science and engineering. In the literature of manifold-valued data analysis up till now, however, only a few works have been carried out concerning the robustness of estimation against noises, outliers, and other sources of perturbations. In this regard, we introduce a novel extrinsic framework for analyzing manifold valued data in a robust manner. First, by extending the notion of the geometric median, we propose a new robust location parameter on manifolds, so-called the extrinsic median. A robust extrinsic regression method is also developed by incorporating the conditional extrinsic median into the classical local polynomial regression method. We present the Weiszfeld's algorithm for implementing the proposed methods. The promising performance of our approach against existing methods is illustrated through simulation studies. \vspace{.5em}

\noindent{\bf Key words:} Robust statistics, Extrinsic median, Nonparametric regression, Riemannian manifolds
 \end{abstract}

\section{Introduction}
Over the past few decades, analyzing data taking vales in non-Euclidean spaces, mostly nonlinear manifolds has attracted increased attention in a wide range of applications, because it allows a richer and more accurate statistical inference based on the usage of the geometrical properties of the underlying data space. Examples of such data types that especially lie on Riemannian manifolds, include directions of points on a sphere \citep{FiLeEm:1987, MaJu:1999}, shapes of configurations extracted from images \citep{BhBh:2012}, data sitting on Stiefel and Grassmann manifolds \citep{Chikuse:1999}, symmetric positive definite matrices arising as observations in diffusion tensor magnetic resonance imaging (DT-MRI) \citep{ZhChIb:2009, YuZhLiMa:2012}, and other types of medical images.

In common with the traditional Euclidean case, statistical inference on the aforementioned manifolds begins by defining the notion of the mean on a certain metric space where data resides. Suppose a random object $\bX$ is defined on a metric space $(\mathcal{M},\rho)$, and let $\mathcal{Q}(\cdot)$ be the probability measure of $\bX$. Then one may consider adopting the traditional definition of the mean to generalize the notion of the mean on an arbitrary metric space, i.e., $\mathbb{E}(\bX) = \int_{\mathcal{M}} \bx \mathcal{Q}(d\bx)$. Unfortunately, however, this attempt at generalization is not directly applicable to the non-Euclidean setting, because it contains non-vector valued integral, which appears to be analytically intractable. Therefore, the conventional definition of the mean needs to be adapted so that it can go beyond Euclidean spaces. The most commonly used one in the literature is the Fr\'echet mean \citep{Fr:1948}, in which the mean is defined as a minimizer of the real valued function defined on metric spaces. To be more specific, for any $\bq \in \mathcal{M}$, consider the Fr\'echet function of the following form
\begin{align}
\mathcal{F} : &\ \mathcal{M} \rightarrow \mathbb{R} \notag\\
&\ \bq \in \mathcal{M} \mapsto \mathcal F(\bq) : \mathbb{E}\left( \rho^2 (\bX,\bq)\right) = \int_\mathcal{M} \rho^2(\bx,\bq) \mathcal{Q}(d\bx),
\end{align}
where $\rho$ denotes generic metric on $\mathcal{M}$. Then the Fr\'echet mean is defined as the minimizer of the Fr\'echet function above, i.e.,
\begin{align}
    \bmu_F = \argmin_{\bq\in\mathcal{M}}\int_\mathcal{M} \rho^2(\bx,\bq) \mathcal{Q}(d\bx). \label{eq:Frechet_mean}
\end{align}
And also, for a given observation $\bx_1,\cdots,\bx_n \in \mathcal{M}$, which consists $n$ independent realizations of $\bX$, the sample Fr\'echet mean is defined as $\overline{\bX} = \argmin_{\bq \in \mathcal{M}}\sum_{i=1}^n\rho^2(\bx_i,\bq)$. Given the relation between the mean and the variance, the above generalization of the mean makes intuitive sense, because it gives analogous definition of the Euclidean mean which is characterized as the minimizer of the variance function, the sum of the squared deviation. In this regard, the Fr\'echet function itself is commonly referred to as the Fr\'echet variance. 

But, first and foremost, what needs to be emphasized is that a metric $\rho$ is not unique for any particular manifold, and there are many possible choices. Regarding this issue, two different types of distance functions have been typically considered in the literature of manifold valued data analysis. The first possible choice is the intrinsic distance, that is the geodesic distance associated with the Riemannian structure $\mathrm{\mathbf{g}}$ on $\mathcal{M}$. The other type of distance is the Euclidean distance induced by the embedding $J :\mathcal{M} \rightarrow E^d$, which is also referred to as the extrinsic distance. The former and the latter distances lead to the intrinsic and the extrinsic data analysis, respectively.

Most of the previous works on analyzing manifold valued data have mainly focused on developing statistical methods based on variants of the Fr\'echet mean. For instance, by introducing the conditional Fr\'echet mean, \citet{PeMu:2019} developed a regression model having a random object in a metric space as a response variable. However, it is well known that the least squares based methods are severely degraded when there exists outliers in the data or the underlying data distribution is heavy tailed. Thus, the lack of statistical robustness, incurred by the squared distance involved in \eqref{eq:Frechet_mean}, becomes apparent in the Fr\'echet mean as well. Whereas, in the Euclidean space setting, considerable efforts have been devoted to improving the robustness of estimators\citep[see][and references therein for a review]{Hu:1964, HuRo:2009, HaRoRoSt:1986}, far less attention has been paid to manifolds. Indeed, one simple way to enhance robustness of estimators is replacing the squared distance by the unsquared distance. In the case of Euclidean space, where $\mathcal{M} = \mathbb{R}^d, \rho(\bx,\bx^\prime) = \Vert \bx - \bx^\prime \Vert$, this approach has a geometric median \citep{HA:1948} as a special case. Along the same line, the intrinsic geometric median on Riemannian manifolds, obtained by minimizing the Fr\'echet function associated with the unsquared geodesic distance, has been proposed by \citet{FlVeJo:2009}. Motivated by the success of the intrinsic geometric median, the primary contribution of this paper is to develop a novel robust location parameter within an extrinsic framework, which entails a computationally efficient algorithm. Moreover, adopting the concept of the classical local polynomial modeling, we implement the robust extrinsic local regression model for manifold valued response and Euclidean predictor. This can be accomplished by extending the concept of the proposed extrinsic median to the notion of the conditional extrinsic median.

The rest of this paper is organized as follows. The proposed extrinsic median is introduced in \cref{sec:2}, along with a brief review of the extrinsic framework for manifold data analysis. Application of the extrinsic median to two different manifolds is also demonstrated in \cref{sec:3}. In \cref{sec:Robust_Reg}, we develop the robust extrinsic local regression (RELR) model, with algorithmic details. In \cref{sec:simulation}, the proposed RELR is implemented in the Kendall's planar shape space and its promising properties are illustrated through simulation studies. Finally, we conclude the paper in \cref{sec:conclusion} with a short discussion and possible directions for the future study.


\section{Extrinsic Median} \label{sec:2}

In this section, we develop the extrinsic median which provides a statistically robust and computationally efficient way of estimating the center point of the data residing on manifolds. Before describing our method, it is useful to begin with a brief review of the extrinsic framework for manifold valued data analysis on which the proposed scheme is based, and the motivation that initiated this study.

\subsection{Extrinsic framework on manifold valued data analysis} \label{sec2: EDA}
The essential idea behind the extrinsic analysis is that any $d$-dimensional manifold $\mathcal{M}$ can be embedded in a higher-dimensional Euclidean space $\mathbb{R}^D$, where $d < D$ \citep{Wh:1944} via an embedding $J$. Thus, to understand the extrinsic approach, it is necessary to recall the definition of embedding $J$. First consider a differentiable map $J : \mathcal{M} \rightarrow \mathbb{R}^D$, whose differential $d_{\bp}J : T_{\bp}\mathcal{M} \rightarrow T_{J({\bp})}\mathbb{R}^D$ is a one-to-one, where $T_{\bp}\mathcal{M}$ and  $T_{J({\bp})}\mathbb{R}^D$ denote the tangent space at $\bp \in \mathcal{M}$ and the tangent space at $J(\bp)$ on $\mathbb{R}^D$, respectively. Note that the class of differentiable maps specified above is called an immersion. Then the one-to-one immersion is called the embedding if it is a homeomorphism from $\mathcal{M}$ to $J(\mathcal{M})$ with the induced topology. Also of note is that the embedding is unfortunately not unique in general, and not all choices of embedding lead to a good estimation result. In this context, the extrinsic approach has been carried out under the premise that the selected embedding preserves intrinsic geometry of the original manifold. Therefore, the embedding satisfying the following condition is typically preferred within extrinsic framework. 
For a Lie group $G$ acting on $\mathcal{M}$, the embedding $J : \mathcal{M} \rightarrow \mathbb{R}^D $ is referred to as the $G$ equivariant embedding if there exists the group homomorphism
$\phi : G \rightarrow \operatorname{GL}_D(\mathbb{R} \ \text{or} \ \mathbb{C})$ satisfying $J(g \bp) = \phi(g) J(\bp), \forall \bp \in \mathcal{M}, g \in G$, where $\operatorname{GL}_D(\mathbb{R} \ \text{or} \ \mathbb{C})$ denotes the general linear group which is the group of $D\times D$ invertible real, or complex matrices. This definition indicates that the group action of $G$ can be recovered in the embedded space $J(\mathcal{M})$ through $\phi$. Therefore, in light of the above, a great amount of geometric feature of the manifold is preserved in the embedded Euclidean space via the equivariant embedding. And the extrinsic distance between two points a manifold can be straightforwardly computed in an embedded space via the Euclidean norm.

Considering all the notions described above, the extrinsic mean $\bmu_{E}$ on a manifold is defined as the minimizer of the Fr\'echet function associated with the extrinsic distance via an embedding $J : \mathcal{M} \rightarrow \mathbb{R}^D$
\begin{align}
    \bmu_E = \argmin_{\bq\in\mathcal{M}}\int_\mathcal{M} \Vert J(\bx)-J(\bq) \Vert^2 \mathcal{Q}(d\bx) \label{eq:extrnsic_mean}.
\end{align}
As compared to the intrinsic mean, the use of the extrinsic approach has several advantages, including (1) computational efficiency \citep{BhElLi:2011}, (2) milder conditions for existence and uniqueness of the solution.
Moreover, the sample extrinsic mean often has a closed form solution. Thus, we here derive the extrinsic mean in an explicit form. To do this the following definition which gives the uniqueness condition of the extrinsic mean should be noted first. A point $\by \in \mathbb{R}^D$ is said to be $J$-nonfocal if there exists a unique point $\bp \in \mathcal{M}$ satisfying $\inf_{\bx \in \mathcal{M}} \Vert \by - J(\bx) \Vert = \Vert \by - J(\bp) \Vert$. Then we let $\bmu = \int_{\mathbb{R}^D} \bu \widetilde{\mathcal{Q}}(d\bu)$ be the mean vector of the induced probability measure $\widetilde{\mathcal{Q}} = \mathcal{Q} \circ J^{-1}$, which is the image of $\mathcal{Q}$ in $\mathbb{R}^D$. Then the Fr\'echet function associated with the extrinsic distance, the right hand side of \eqref{eq:extrnsic_mean}, also can be written as
\begin{align}
    \mathcal{F}(\bq) = \Vert J(\bq) - \bmu \Vert^2 + \int_{\mathbb{R}^D} \Vert \bx - \bmu \Vert^2 \widetilde{\mathcal{Q}}(d\bx).\label{eq:sup_frechet}
\end{align}
Hence, we have $\inf_{\bq \in \mathcal{M}} \mathcal{F}(\bq) = \inf_{J(\bq) \in \widetilde{\mathcal{M}}}\Vert J(\bq) - \bmu \Vert^2 + \int_{\mathbb{R}^D} \Vert \bx - \bmu \Vert^2 \widetilde{\mathcal{Q}}(d\bx)$, where $\widetilde{\mathcal{M}} = J(\mathcal{M})$ denotes the image of the embedding. This indicates the set of points $\bx \in \mathcal{M}$ satisfying $\inf_{j(\bq) \in \widetilde{\mathcal{M}}} \Vert J(\bq) - \bmu \Vert = \Vert J(\bx) - \bmu \Vert$ consists the extrinsic mean set. And since, \eqref{eq:sup_frechet} is minimized on $\widetilde{\mathcal{M}}$ by $J(\bq) = \mathcal{P}(\bmu)$, where $\mathcal{P}:\mathbb{R}^D \rightarrow \widetilde{\mathcal{M}}$ such that for $\forall \bu^\prime \in \widetilde{\mathcal{M}}$, $\mathcal{P}(\by)=\{ \bu \in \widetilde{\mathcal{M}} : \Vert \bu - \by \Vert \leq \Vert \bu^\prime - \by \Vert \}$, the extrinsic mean uniquely exists if and only if the mean vector $\bmu$ is a $J$-nonfocal point. In that case, the extrinsic mean is obtained by taking the inverse of the embedding, i.e., $\bmu_E = J^{-1} ( \mathcal{P}(\bmu))$. Following from the above, the sample extrinsic mean is obtained in a straightforward manner. Suppose we observe $\bx_1, \dots, \bx_n \in \mathcal{M}$, consisting of independent and identically distributed copies of $\bX$, then the sample extrinsic mean  is given by $\overline{\bX}_E = J^{-1} \{ \mathcal{P} ( \overline{J(\bX)})\},$ where $\overline{J(\bX)} = \sum_{i=1}^n J(\bx_i)/n$. Theoretical properties of the sample extrinsic mean, including asymptotic distribution, consistency, and the uniqueness conditions are well established in \citet{BhPa:2003,BhPa:2005}.

\subsection{Extrinsic Median} \label{sec2: EMed}

Before proceeding to present our proposed method, we begin by giving a quick overview of the existing Euclidean geometric median. In the Euclidean multivariate setting, a large body of research has been devoted to developing the robust estimation of the central point \citep[see for a review][]{Sm:1990}, among which the geometric median, initially proposed by \citet{HA:1948}, has received the greatest attention over the last decades due both to its nice robustness properties and computational efficiency \citep{Ca:2013,Ca:2017}.
The geometric median of a random variable $\bX \in \mathbb{R}^k$ is defined by $\bm = \argmin_{\bq \in \mathbb{R}^k} \mathbb{E} \Vert \bX - \bq \Vert$, or alternatively but equivalently, is obtained by minimizing $\int_{\mathbb{R}^k} \left( \Vert \bx-\bq \Vert - \Vert \bx \Vert \right)\mathcal{Q}(d\bx).
$
Note that the latter expression has been more commonly adopted in practice, since no assumption regarding the first order moment of $\bX$ needs to be imposed. Moreover, when $k=1$ the above definition corresponds with the classical notion of the median which is defined in terms of the cumulative distribution function. In this sense, the geometric median plays a role of the multivariate generalization of the univariate median. Now suppose that we observe $\mathcal{X}=\{\bx_1, \cdots, \bx_n\}$ consisting of $n$ independent and identically distributed realizations of $\bX$, then the sample geometric median $\widehat{\bm}$, which provides the natural estimation of $\bm$, is obtained by finding the optimal value that minimizes the
sum of Euclidean distances to given data points, i.e, 
\begin{align*}
\widehat{\bm} = \argmin_{\bq \in \mathbb{R}^k} \sum_{i=1}^n \left( \Vert \bx_i-\bq \Vert - \Vert \bx_i \Vert \right).
\end{align*}
The above optimization problem is also known as the Fermat-Weber problem \citep{We:1929}, and the numerical algorithm for solving the geometric median problem was firstly introduced by \citet{We:1937}. It is shown in \citet{Ke:1987} that the sample geometric median is uniquely determined unless all the given observations do not lie on the same line.

Although many nice properties have been investigated including invariance under rotation and translation, asymptotic behavior \citep{MoNoOj:2010}, concentration \citep{Mi:2015}, however, the most notable advantage of the geometric median over the mean is that it provides a robust estimation of the centrality under the presence of noise in the data. The robustness of the estimator is usually measured by the breakdown point. For a given data $\mathcal{X}$ in the above, we further consider the outlier-contaminated data $\mathcal{X}^\ast_m = \{\bx_1^\ast,\cdots,\bx_m^\ast,\bx_{m+1},\cdots,\bx_{n}\}$, where the first $m$ elements are replaced by extreme noises, then the breakdown point of the estimator $T_n$ is defined by 
\begin{align*}
B(T_n) = \min_{1 \leq m\leq n} \left\{ \frac{m}{n} \ \left\vert \ \sup_{\mathcal{X}^\ast_m} \Vert T_n(\mathcal{X}^\ast_m) - T_n(\mathcal{X}) \Vert = \infty \right. \right\}.
\end{align*}
In an intuitive sense, the breakdown point can be interpreted as the highest proportion of contamination that the estimator can tolerate before the difference between the estimated result obtained from the contaminated data and the initial result goes to infinity. Being less affected by outliers, the geometric median achieves the asymptotic breakdown point of $0.5$ \citep{LoRo:1991}. This indicates that the geometric median can provide a good estimation result even though up to half of the data is corrupted. Note that the breakdown point of the sample mean is $1/n$, meaning that only one single extreme value changes the estimation result arbitrary.

We now turn our attention to manifolds. Much of the research regarding estimation of the central location of data on manifolds has focused on the variants of the Fr\'echet mean including the intrinsic, extrinsic mean. However, the common drawback of the least square based methods is their lack of robustness to extreme values, which makes the Fr\'echet mean inevitably sensitive to heavy tailed distributions and outlying values. Nevertheless, even though there has been a considerable increase in the need for robust statistical methods on manifolds, less has been done on this issue. The pioneering attempt to address this is seen in the work of \citet{FlVeJo:2009}, where they proposed the intrinsic median by substituting the squared geodesic distance employed in the Fr\'echet function with the unsquared one, i.e., $\argmin_{\bq \in \mathcal{M}} \int_{\mathcal{M}} \rho(\bx,\bq) \mathcal{Q}(d\bx)$. Although their approach has had a great deal of success in generalizing the notion of the median to Riemannian manifolds by attaining the same breakdown point as in the Euclidean case, it has some inherent drawbacks that may limit its application. (1) For example, it is often difficult to derive conditions for the uniqueness and the existence of the intrinsic median without restrictions on its support. (2) Moreover, even when the intrinsic median exists, it requires iterated algorithms on manifolds which may incur a large amount of computational overhead. These drawbacks highlight the need for the development of novel approaches aimed at giving a more computationally efficient method in which the existence and uniqueness conditions are well established and easy to understand.

In an attempt to address the methodological shortcomings of the intrinsic median described above, we propose the following new robust location parameter by making use of the unsquared extrinsic distance,
\begin{align}
    \bold{m}_E = \argmin_{\bq\in\mathcal{M}}\int_\mathcal{M} \Vert J(\bx)-J(\bq) \Vert \mathcal{Q}(d\bx) \label{eq:Extrinsic_med}.
\end{align}
Given observations $\bx_1,\cdots, \bx_n$ consisting of independent and identically distributed copies of manifold valued random variable $\bX$, the above location parameter, which we call the population extrinsic median, can be estimated by replacing $\mathcal{Q}$ with the empirical measure $\widehat{\mathcal{Q}} = 1/n\sum_{i=1}^n\delta_{\bx_i}$, i.e,
\begin{align}
    \widehat{\mathbf{m}}_E &= \argmin_{\bq\in\mathcal{M}}\sum_{i=1}^n \Vert J(\bx_i)-J(\bq) \Vert \notag\\
    &= J^{-1} \left( \mathcal{P} \Big(\argmin_{\bm \in\mathbb{R}^D}\sum_{i=1}^n \Vert J(\bx_i)- \bm \Vert \Big) \right). \label{eq:Sample_Ext_Med}
\end{align}
Unlike the sample extrinsic mean which has a closed form expression depending on the projection map, the sample extrinsic median requires an iterative algorithm, called the Weiszfeld's algorithm for solving the inner minimization problem. But taking advantage of Euclidean geometry, the proposed extrinsic approach allows us to exploit the original form of Weiszfeld algorithm without requiring any further modifications. Indeed, the following \Cref{al:Ext_median} for solving \eqref{eq:Sample_Ext_Med} can be easily derived due to the convexity of the object function associated with the Euclidean norm, $f(\pbm) = \sum_{i=1}^n \Vert J(\bx_i)- \pbm \Vert$.
\begin{algorithm}
\caption{Extrinsic Median}
        \begin{algorithmic}[1]
            \Input{$n$ observations $\mathcal{X} = \{\bx_1, \cdots , \bx_n\}$}
            \Initialize{$t=0, \pbm^0 \ \text{and} \ \varepsilon$}
            \While{$\Vert \pbm^{t+1}-\pbm^{t}\Vert < \varepsilon$}  
                \State Compute the gradient direction $\nabla f(\pbm^{t})$
                    \begin{align*}
                        \sum_{i=1}^n \frac{\pbm^{t}-J(\bx_i)}{\Vert \pbm^{t}-J(\bx_i)\Vert}
                    \end{align*}
                \State Compute the step size
                    \begin{align*}
                       s^{t} =  \left(\sum_{i=1}^n \frac{1}{\Vert \pbm^{t} - J(\bx_i) \Vert }\right)^{-1}
                    \end{align*}
                \State Update $m^{t+1}$
                     \begin{align*}
                        \pbm^{t+1} = \pbm^{t} - s^t \cdot \nabla f(\pbm^{t})
                    \end{align*}
                \State{$t \leftarrow t+1$}
            \EndWhile \State \textbf{end while}
            \Output{Estimated robust estimator $\widehat{\mathbf{m}}_E = J^{-1}(\mathcal{P}(\pbm^\ast))$} \Comment{$\pbm^\ast$ denotes the optimal value.}
        \end{algorithmic}\label{al:Ext_median}
\end{algorithm}

In fact, when incorporated into the extrinsic framework, without incurring the computational overhead encountered in the Riemannian manifolds optimization, our approach possess a practical advantage over the intrinsic geometric median algorithm. Specifically, in contrast with the intrinsic geometric median algorithm \citep{FlVeJo:2009}, 
\begin{align*}
    \pbm^{t+1} = \texttt{Exp}_{\pbm^t}(\alpha \mathbf{v}^t), \ \mathbf{v}^t = \sum_{i=1}^n \frac{\texttt{Log}_{\pbm^t}(\bx_i)}{\rho(\pbm^t,\bx_i)} \cdot \left(\sum_{i=1}^n \frac{1}{\rho(\pbm^t,\bx_i)} \right)^{-1},
\end{align*}
in which $\texttt{Exp} : T_{\bm^t} \rightarrow \mathcal{M}$, $\texttt{Log} : \mathcal{M} \rightarrow T_{\bm^t} $ have to be repeatedly evaluated at each iteration, our method can reduce the additional computational cost caused by the above exponential and logarithm mapping. As indicated above, it should be emphasized that although the data lie on manifolds, the proposed algorithm itself operates in Euclidean space without suffering from geometrical restrictions and constraints posed by non-Euclidean data domains.

Also of importance is that when all given data points are not colinear (i.e., there doesn't exist $\by,\bold{z} \in \mathbb{R}^p$ and $\alpha_1 , \cdots \alpha_n \in \mathbb{R}$, such that $\forall i=1,\cdots,n$, $\bx_i = \by + \alpha_i \bold{z}$), the Weiszfeld's algorithm converges always to the unique optimal solution \citep{Ku:1973}.
 For each iteration step, $\pbm^t \not\in \{\bx_1, \cdots, \bx_n\}$ is typically assumed in order to ensure that the proposed algorithm converges to the global optimal solution. Details of the algorithm including derivation and their convergence analysis are deferred to \Cref{sec:Robust_Reg}, in which we discuss the algorithm for solving the robust extrinsic local regression (\Cref{al:RELR}) of which \Cref{al:Ext_median} is a special case.

\section{Applications of Extrinsic Median}\label{sec:3}

In this section, the practical applicability and performance regarding robustness of the extrinsic median is examined through simulation studies on two important manifolds. To gain further insights into the extrinsic median, the experiment was carried out under different conditions which may possibly be encountered in practice. Results are compared with competing methods including the extrinsic mean.

\subsection{Unit sphere}\label{sec:app_unit_sphere}

The first and simplest application is an $d$-unit sphere, $\mathcal{S}^{d}=\{\bx\in\mathbb{R}^{d+1}:\Vert \bx \Vert = 1 \}$, which is a $d$-dimensional submanifold of $\mathbb{R}^{d+1}$. It can be embedded into $\mathbb{R}^{d+1}$ through the inclusion map $\iota: \mathcal{S}^{d} \rightarrow \mathbb{R}^{d+1}$, $\iota(\bx)=\bx$. The projection map $\mathcal{P} : \mathbb{R}^{d+1} \rightarrow \mathcal{S}^d $ is defined by $\mathcal{P}(\bmu) = \bmu/\Vert \bmu \Vert$, where $\bmu = \int_{\mathbb{R}^{d+1}} \bx \widetilde{\mathcal{Q}}(d\bx)$ is the mean vector calculated in the ambient space of $\mathbb{R}^{{d+1}}$ and $\widetilde{\mathcal{Q}} = \mathcal{Q} \circ \iota^{-1}$ denotes the induced probability measure. Note that $\bmu$ is $\iota$-nonfocal unless $\bmu=\bf{0}$. For further details about statistical analysis on $\mathcal{S}^d$, we refer to \citet{FiLeEm:1987}, \citet{MaJu:1999} and references therein. 

In the following, the performance of extrinsic median on $\mathcal{S}^{d}$ is illustrated by simulation studies. To ease visualization of how the generated data looks like and how the extrinsic median is capable to provide robust estimation than the extrinsic mean, the simplest case $\mathcal{S}^1$, for which data is observed as the form of direction on a unit circle in $2$-dimensional Euclidean plane $\mathbb{R}^2$, is considered. Note that the data on $\mathcal{S}^1$ is typically represented by an angle measured in radians $\theta \in [0,2\pi)$, or the unit vector $\bx = (\cos\theta,\sin\theta)^\top$ from the origin. The performance of the extrinsic median is compared under two different simulation scenarios as follows. In the first scenario, outliers are artificially imposed to the von Mises ($\operatorname{VM}$) distribution, whereas in the second scenario, heavy tailed random observations are generated from the general wrapped $\alpha$ stable ($\operatorname{WS}$) distribution. The detailed description of each scenario is given in the following.

\noindent\textbf{Scenario 1) :  The von Mises distribution with the Normal outliers.}\\
\indent We suppose $\theta$ follows the von Misese distribution, $\operatorname{VM} (\mu,\kappa)$ with the density function 
\begin{align*}
    f(\theta) = \frac{e^{\kappa \cos(\theta-\mu)}}{2\pi I_0(\kappa)},
\end{align*}
where $\mu, \kappa$ denote the mean direction and concentration parameter, respectively and $I_0(\kappa)$ is the modified Bessel function of order $0$. Note that larger value of $\kappa$ means higher concentration towards $\mu$. We first generate random data $\{\theta_i\}_{i=1}^n$ consisting independent and identically distributed copies of $\theta \sim \operatorname{VM} (\mu,\kappa)$, then $n_{\text{cont}}$ outliers $o_j^\prime \stackrel{iid}{\sim} \operatorname{Normal}(\mu_\textbf{out},\sigma^2)$, where $\mu_\textbf{out} \neq \mu$, are added to the initial data set so that the contamination level satisfies the prespecified value $r = n_{\text{cont}}/n$. Additionally normalization of the generated outliers $o_j = o_j^\prime \hspace{-.2cm}\pmod{2\pi}$ is required to ensure $0 \leq o_j < 2\pi$.

\noindent\textbf{Scenario 2) : The wrapped $\alpha$-stable random variable.}\\
\indent The density function of a wrapped $\alpha$-stable random variable $\theta$ is given by
\begin{align}
    f(\theta) = \frac{1}{2\pi} + \frac{1}{\pi}\sum_{k=1}^\infty \exp(-\tau^\alpha k^\alpha) \cos \left( k(\theta-\mu) - \tau^\alpha k^\alpha \beta \tan\frac{\alpha \pi}{2}\right),\label{eq:WS_dist}
\end{align}
where $0 < \alpha \leq 2$, $\tau \geq 0$ and $\vert \beta \vert \leq 1$ denote the shape, dispersion and skewness parameters, respectively. Note that small values of $\alpha$ yield heavy tailed distributions but larger $\tau$ values yield more highly dispersed distributions. The benefit of using wrapped $\alpha$-stable distribution is that it provides a high degree of flexibility in modeling directional data in the sense that it contains many popular circular distributions as special cases, including the wrapped normal distribution ($\alpha=2$) and the wrapped Cauchy distribution ($\alpha=1, \beta=0$); see \citet{JaSe:2001} for further details. 

Representative illustrations of simulation scenario 1 and scenario 2 are
displayed in \Cref{fig:circle} (left and right panel, respectively), together with estimated values. In both scenarios, we observed that extrinsic mean estimations were forced by outliers to be pulled far away from the true mean direction ($\mu=0$, i.e., $\bx = (1,0)$ in the Cartesian coordinate system). In \Cref{tab:circle_scenarios}, the extrinsic median is compared with the extrinsic mean in terms of norm of difference between the true mean direction and the estimated direction. The results are averaged over 20 replications. In the first scenario, four different settings are considered according to the level of contamination, $r \in \{0, 0.1, 0.2, 0.4\}$, where 0 represents no outlier exists. As would be expected, the result obtained from the first scenario indicates that as the contamination level becomes higher, the extrinsic mean is far more vulnerable to the presence of outliers than the extrinsic median. The bottom panel of the table shows the result of the scenario 2 in which we
fix $\beta=0$ for the symmetry of the distribution and vary the tail heaviness level by adjusting $\alpha$ from $0.1$ to $2$, and the dispersion of the data is controlled by differing $\tau = 0.2, 2$. It is observed that extrinsic median not only has a better predictive ability in the case of heavy tailed data which corresponds to small values of $\alpha$, but also a comparable performance was achieved even in non-heavy tailed data, generated from the wrapped normal distribution ($\alpha = 2$).
\begin{figure}
\centering
    \includegraphics[width=0.49\linewidth]{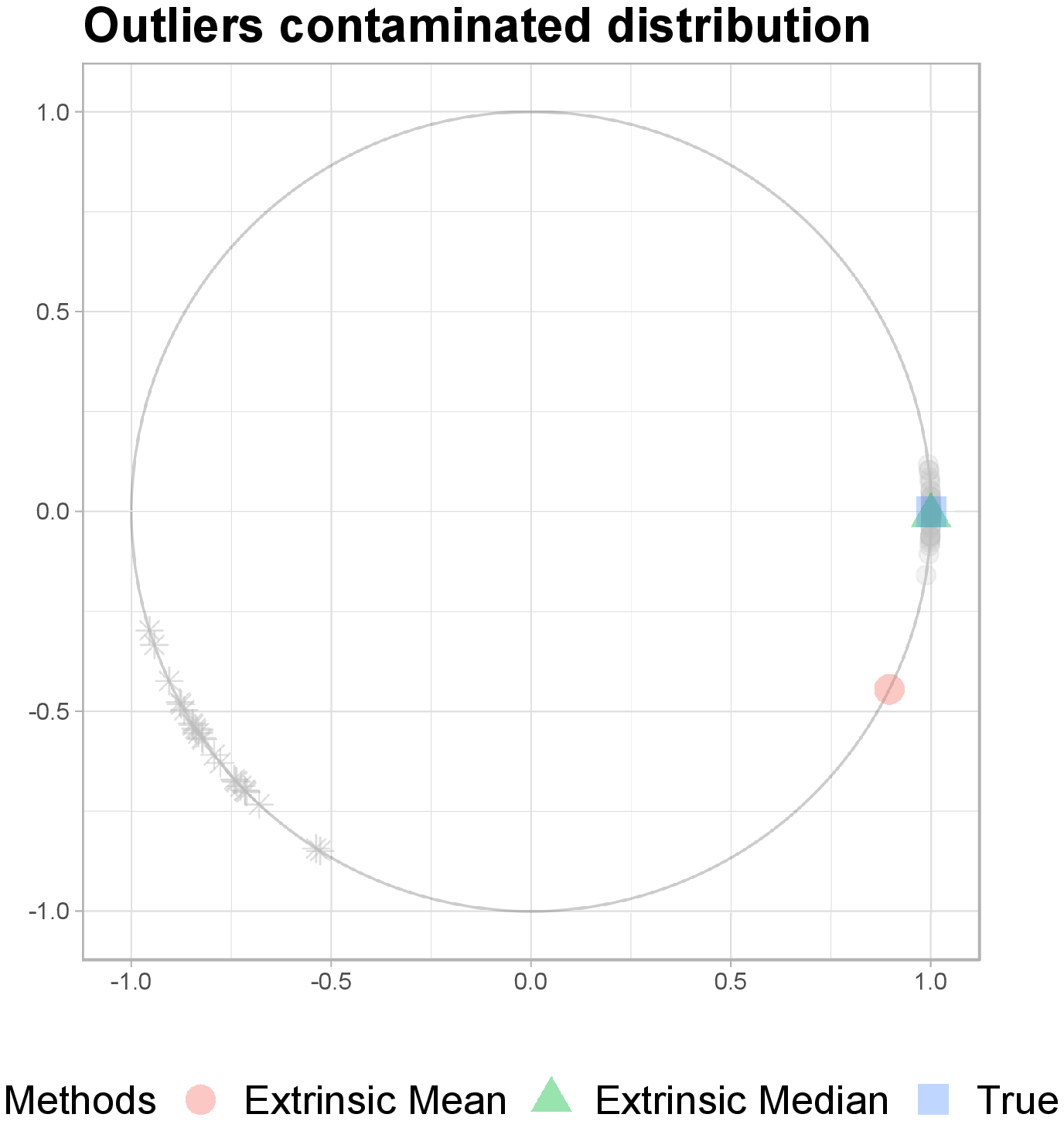}
    \includegraphics[width=0.49\linewidth]{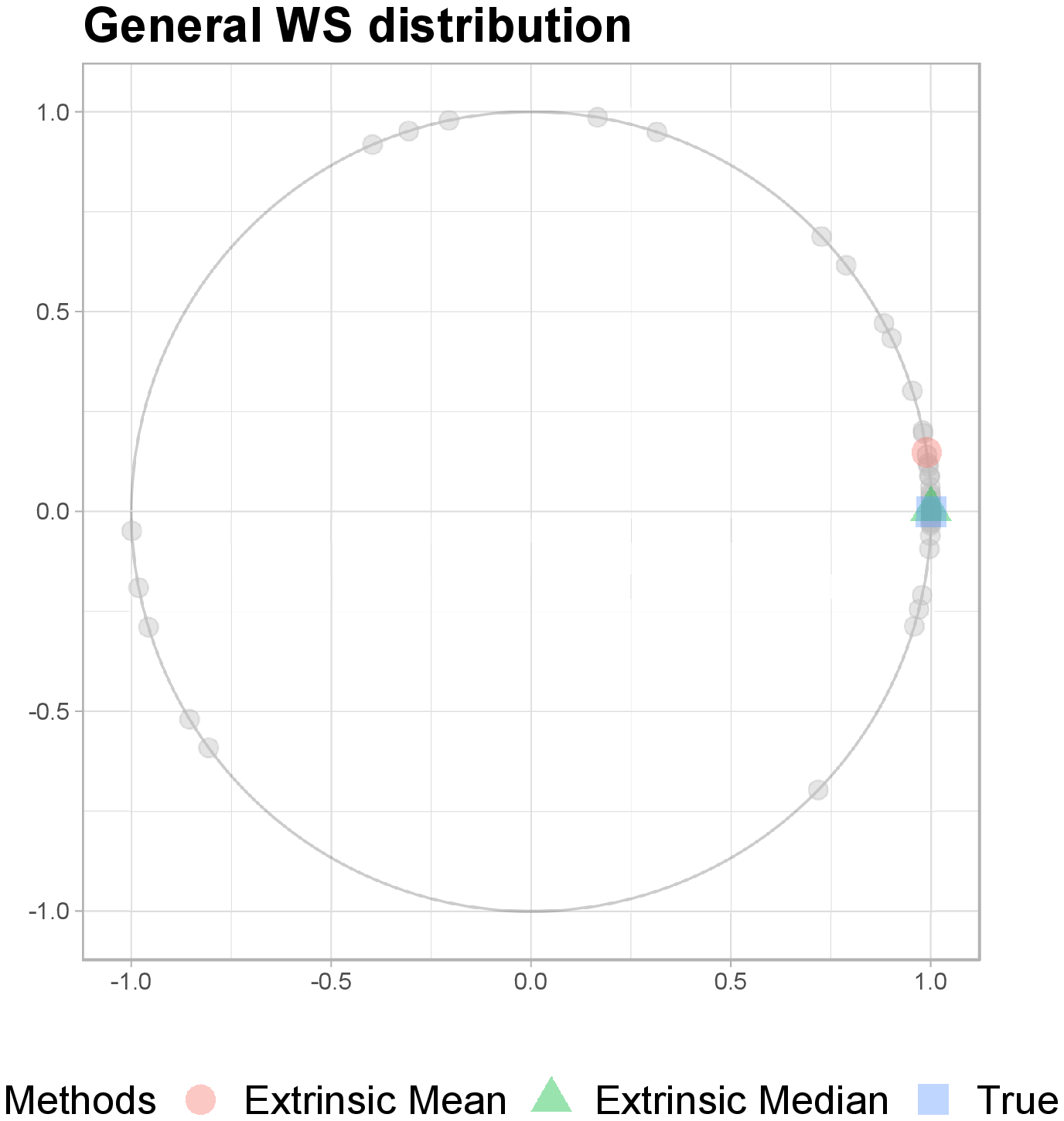}
    \caption{Left: an example of the Scenario 1), consisting of observations drawn from the VM distribution and outliers displayed in light grey circles and asterisks, respectively. Right: Scenario 2). In both setting, the fit of the extrinsic mean and median are displayed with the true mean value.}
    \label{fig:circle}
\end{figure}

\begin{table}[ht]
    \centering
    \begin{tabular}{lllllllllllllll}
    \multicolumn{9}{l}{\textbf{Scenario 1 :} The outlier contaminated case}\\
    \toprule
        & \multicolumn{8}{c}{Ratio of outliler}\\
        \cmidrule(lr){2-9}
        &  \multicolumn{2}{c}{No outlier} & \multicolumn{2}{c}{0.1} & \multicolumn{2}{c}{0.2} & \multicolumn{2}{c}{0.4} \\
         \cmidrule(lr){2-3}  \cmidrule(lr){4-5} \cmidrule(lr){6-7} \cmidrule(lr){8-9}
         $N$ & E. Mean & E. Med & E. Mean & E. Med & E. Mean & E. Med & E. Mean & E. Med \\
         \midrule  
        10 & 0.0100 & 0.0090  & 0.0550 & 0.0129 & 0.1376 & 0.0132 & 0.2778 & 0.0169\\
        50 & 0.0064 & 0.0066 & 0.0663 & 0.0106 & 0.1311 & 0.0115 &  0.2702 & 0.0122\\
        100 & 0.0032 & 0.0033 & 0.0624 & 0.0101 & 0.1315 & 0.0087 & 0.2732 & 0.0118\\
        200 & 0.0025 & 0.0028 & 0.0655 & 0.0099 & 0.1331 & 0.0101 & 0.2730 & 0.0136\\
        \bottomrule
    \end{tabular}\\ \vspace{.5cm}
    \begin{tabular}{llllllllllllllll}
    \multicolumn{9}{l}{\textbf{Scenario 2 :} The heavy tailed distribution }\\
    \toprule
         & & \multicolumn{8}{c}{$\alpha$}\\
        \cmidrule(lr){3-10}
        & &  \multicolumn{2}{c}{0.1} & \multicolumn{2}{c}{0.5} &  \multicolumn{2}{c}{1} & \multicolumn{2}{c}{2} \\
         \cmidrule(lr){3-4}  \cmidrule(lr){5-6} \cmidrule(lr){7-8} \cmidrule(lr){9-10}
        $\tau$ &$N$ & E. Mean & E. Med & E. Mean & E. Med & E. Mean & E. Med & E. Mean & E. Med \\
         \midrule  
        \multirow{4}{*}{0.2} &  10 & 0.3411 & 0.1658 & 0.1971 & 0.0540 &  0.1044 & 0.0505 & 0.0376 & 0.0393\\
        &50 & 0.1053 & 0.0011 & 0.0709 & 0.0135 & 0.0454 & 0.0199 & 0.0188 & 0.0197\\
        &100 & 0.0653 & 0.0003 & 0.0432 & 0.0095 & 0.0364 & 0.0141 & 0.0123 & 0.0194\\
        &200 & 0.0582 & 0.0003 & 0.0211 & 0.0062 & 0.0198 & 0.0091 &  0.0086 & 0.0095\\
         \midrule  
        \multirow{4}{*}{2} &10 & 0.3105 & 0.1294 & 0.4291 & 0.4054 & 0.4446 & 0.3963 & 0.3876 & 0.4501 \\
        &50 & 0.1471 & 0.0014 & 0.2221 & 0.1112 & 0.1619 & 0.1550 &  0.1824 & 0.2097\\
        &100 & 0.1312 & 0.0023 & 0.1434 & 0.0858 & 0.1415 & 0.1339 &  0.2075 & 0.2641 \\
        &200 & 0.0867 & 0.0005 & 0.0894 & 0.0567 & 0.0967 & 0.0845 & 0.1161 & 0.1421\\
        \bottomrule 
    \end{tabular}
    \caption{Results of the experiment described in \Cref{sec:app_unit_sphere}. The result of scenario 1 and 2 are presented in the top and bottom panels of the table, respectively}
    \label{tab:circle_scenarios}
\end{table}
\subsection{Planar Shape}\label{sec:app_planarshape}

For the second application of the extrinsic median, we consider the Kendall's planar shape space of $k$-ads, denoted by $\Sigma_2^k$ \citep{Kendall:1984} which is the most popular manifold in landmark based shape analysis literature. Before proceeding to present simulation study on the planar shape space, we give necessary preliminaries about this space.

The planar shape can be defined as a random object that is invariant under the Euclidean similarity transformation. Therefore, the planar shape is identified as the remaining geometric information after filtering out the effect of translation, scaling, and rotation. To ease understanding of this nonlinear manifold, let us begin by demonstrating the geometry of the planar shape space. First, the unregistered $k$-ads which is a landmark configuration that describes a shape of an object can be conveniently placed on a complex plane as a set of $k$ complex numbers, i.e., $\bz = (z_1,\cdots,z_k)$, where $z_j = x_j + i y_j \in \mathbb{C}$. Then one can obtain the preshape of $\bz$ by quotienting out the effect of translation and scale
\begin{align*}
    \bu = \frac{\bz-\langle \bz \rangle}{\Vert \bz-\langle \bz \rangle \Vert},
\end{align*}
where $\langle \bz \rangle = (\bar{z},\cdots,\bar{z})$, and $\bar{z} = \frac{1}{k}\sum_{j=1}^k z_j$. This indicates that the preshape space is equivalent to a complex hypersphere, $\mathbb{C}S^{k-1} = \left\{ \bu \in \mathbb{C}^k \vert \sum_{i=1}^k \bu_j = 0, \Vert \bu \Vert = 1 \right\}$. Then the shape $[\bz]$ of $\bz$ which is the geometric object that is invariant under a rotation effect, is obtained by considering all rotated version of $\bu$, i.e.,$[\bz] = \left\{ e^{i\theta} \bu : 0 \leq  \theta < 2\pi \right\}$.
As the shape is defined as the orbit of $\bu \in \mathbb{C}S^{k-1}$, Kendall's planar shape space $\Sigma_2^k =  \mathbb{C}S^{k-1} / SO(2)$ is the quotient space of the preshape space under the action of special orthogonal group of dimension $2$, $SO(2) = \{\bf{A} \in \operatorname{GL}_2 \vert \bf{A}^{-1} = \bf{A}^\top ,\operatorname{det}(\bf{A}) = 1\}$. Alternatively, the effects of scaling by a scalar $r>0$ and rotating by an angle $0 \leq \theta <2\pi$ can be simultaneously filtered out via multiplying by the complex number $\lambda= r e^{i\theta}$ from the centralized $k$-ad configuration $\bz-\langle \bz \rangle$, i.e., $[\bz] = \{ \lambda (\bz-\langle \bz \rangle) : \lambda \in \mathbb{C}\setminus\{0\} \}$. Due to this algebraically simpler characterization, the planar shape space is equivalently identified as the complex projective space $\Sigma_m^k \simeq \mathbb{C}P^{k-2}$ that is the space of all complex lines through the origin in $\mathbb{C}^{k-1}$.  
More detailed explanation of the geometrical structure of the shape manifold is provided in \citet{DrMa:1998,BhBh:2012}. 

We now describe the extrinsic approach in $\Sigma_2^k$. Due to \citet{Ke:1992}, in the Kendall's planar shape space the Veronese–Whitney embedding is typically used, which maps $\Sigma_2^k$ into the space of  $k \times k$ complex Hermitian matrices $\mathcal{S}(k,\mathbb{C})$ by
\begin{align}
    J :\ & \Sigma_2^k \rightarrow \mathcal{S}(k,\mathbb{C}) \notag\\
    &[\bz] \mapsto J([\bz]) = \bu \bu^\ast, \label{eq:VW_embedding}
\end{align}
where $\bu^\ast$ denotes the complex conjugate transpose of $\bu$. Furthermore, since $J(\bf{A} [\bz]) = \bf{A}\bu \bu^\ast \bf{A}^\ast$ holds for any ${\bf{A}} \in SU(k)$, where $SU(k) = \left\{ \bf{A} \in \operatorname{GL}_k(\mathbb{C})\ \vert \ \bf{AA}^\ast = \bf{I}, \det(\bf{A}) = 1 \right\}$ denotes the special unitary group, the Veronese-Whitney embedding is shown to be the $SU(k)$ equivariant embedding, i.e., $J ({\bf{A}} [\bz])= \phi({\bf{A}}) J([\bz])$. It follows directly by taking the Lie group homomorphism $\phi : \operatorname{SU}(k) \rightarrow \operatorname{GL}_k(\mathbb{C})$ such that $\phi({\bf{A}}){\bf{B}} = \bf{ABA}^\ast$, where ${\bf{B}} \in \mathcal{S}(k,\mathbb{C})$.
It also should be noted that the squared extrinsic distance of two planar shapes is defined in terms of the Frobenius norm of a complex matrix
\begin{align}
	\rho_E^2([\bz_1],[\bz_2]) &= \Vert J([\bz_1])-J([\bz_2]) \Vert_F^2 \notag\\
	&= \text{Trace} \Big(\left\{J([\bz_1])-J([\bz_2])\right\}\left\{J([\bz_1])-J([\bz_2])\right\}^\ast \Big)\notag\\
	&= \sum_{j=1}^k\sum_{i=1}^k \left\vert \{J([\bz_1])-J([\bz_2])\}_{i,j} \right\vert^2.
\end{align}
Since the above extrinsic distance takes into account every $k^2$ element of $J([\bz_1])-J([\bz_2])$, it can be viewed as the natural Euclidean distance between the two embedded shapes $J([\bz_1])$ and $J([\bz_2])$. Lastly, the inverse and projection map of the embedding, $J^{-1}(\mathcal{P}(\cdot))$ in \eqref{eq:Sample_Ext_Med}, remain to be identified. Let $\widetilde{\bX}$ be the arbitrary point on the ambient Euclidean space $\mathbb{R}^D$, then the projection mapping of $\widetilde{\bX}$ onto the image of the embedding $\widetilde{\mathcal{M}} = J(\Sigma_2^k)$ is given by $\bgamma \bgamma^\ast$, where $\bgamma$ is the unit eigenvector of $\widetilde{\bX}$ corresponding to the largest eigenvalue. Subsequently, the inverse map of $J^{-1}(\bgamma \bgamma^\ast) = [\bgamma]$ can be obtained directly from \eqref{eq:VW_embedding} without extra operations.

Now, in order to gauge the performance of the proposed method, we perform simulation experiments on the planar shape space by investigating the corpus callosum (CC) data extracted from the subset of ADHD-200 dataset (\url{http://fcon_1000.projects.nitrc.org/indi/adhd200/}). The original dataset includes functional magnetic resonance imaging (fMRI) scans of subjects categorized into four different groups based on their symptoms and conditions; (1) Typically developing children, (2) ADHD-Hyperactive, (3) ADHD-Inattentive, and (4) ADHD-Combined. The CC shapes of 647 subjects which consist of 50 landmarks were preprocessed and analyzed by \cite{HuStZh:2015} to illustrate their clustering method. In this experiment, however, only a subset of the data (the CC shapes extracted from 404 typically developing children) was utilized. Since the main aim of this simulation study is to see how the extrinsic median behaves robustly in a noisy environment, where a number of landmarks are contaminated by outliers, we further manipulated the data by assigning random noises generated from $\operatorname{Normal}(\mu=1000 , \sigma=5)$ to the real parts (the $x$ cooridinates) of the $10$th $\sim$ $15$th landmarks. The number of outliers were varied according to the noise level $r$ ranging from $0$ to $0.4$. Additionally, the extrinsic median was compared to several competing methods including the maximum likelihood estimator of the isotropic offset Gaussian distribution \citep{MaDr:1989, DrMa:1991} and different variants of the Fr\'echet means such as intrinsic mean, the Fr\'echet mean associated with the partial Procrustes distance and the extrinsic mean. 

\begin{figure}[ht!]
\centering
    \includegraphics[width=0.9\linewidth]{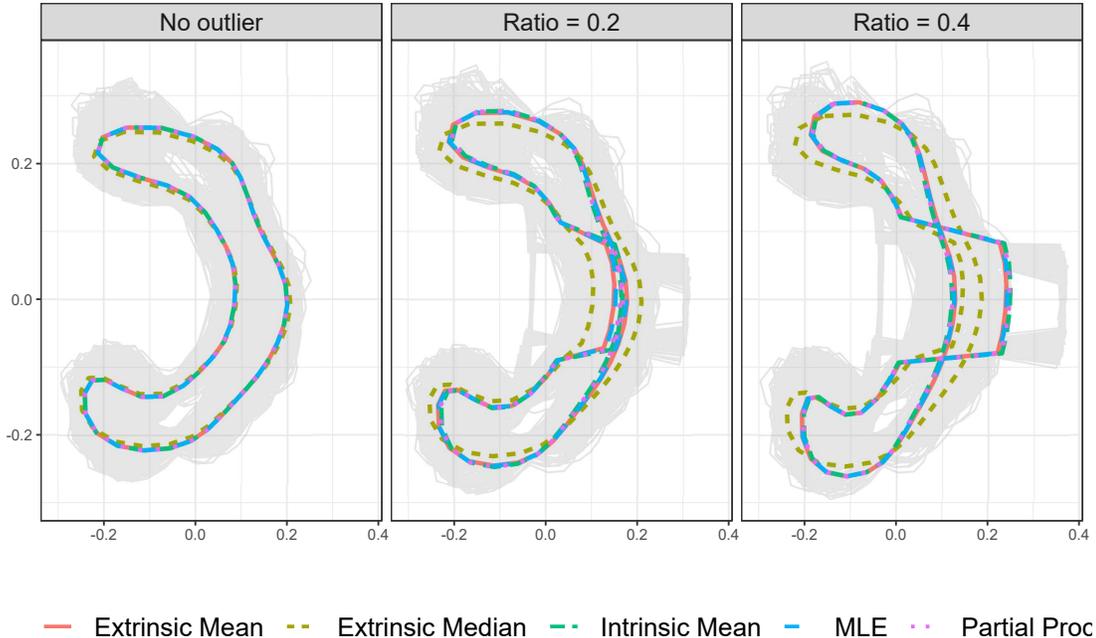}
    \caption{Examples of normal C.C. data and estimated shapes. Each individual shape is displayed in light grey solid line, and the results of different methods are represented by different colors and line styles.}    \label{fig:CC_shape}
\end{figure}
\Cref{fig:CC_shape} shows the CC shapes obtained from several simulated data with different noise level $r=\{0,0.2,0.4\}$. As shown in the left panel ($r=0$), no remarkable difference was observed in estimated shapes between methods. On the other hand, however, the middle and the right panels present that with the exception of the extrinsic median, other methods appeared to be affected by outliers and led to the distortion in the estimated shapes. Importantly, although the deformation of the estimated shape is occurred as well in the extrinsic median at the highest noise level tested, we have seen that it stays much closer to its initial result, than those of the other methods compared.

We now introduce the measure that quantifies the robustness of estimators on the planar shape space. To do this, we let $\overline{\bx}$, $\widehat{\bx}^\ast$ denote the estimated shape obtained from the uncontaminated and contaminated data, respectively. Then the full Procrustes distance between $\overline{\bx}$ and $\widehat{\bx}^\ast$, i.e., $\rho_{FP}(\widehat{\bx}^\ast, \overline{\bx}) = \sqrt{1 - \vert\langle \widehat{\bx}^\ast, \overline{\bx} \rangle\vert^2}$, is considered to assess whether methods can provide the robust estimation without being influenced by outlier values. This appears analogous to that used by the breakdown point in which the Euclidean version of the foregoing quantity, $\Vert T_n(\mathcal{X}^\ast_m) - T_n(\mathcal{X}) \Vert$ is exploited. However, unlike in the case of the breakdown point, which gives the highest fraction of gross outliers in the data that can be handled by an estimator, the smaller value of $\rho_{FP}(\widehat{\bx}^\ast, \overline{\bx})$ implies the method has more resistance to outliers.
Results of this simulation, averaged over 20 replications for the different contamination levels, are presented in \Cref{fig:shape_simulation}. This illustrates that the proposed extrinsic median has a remarkable ability to resist against outliers in the case where the data are contaminated with significant levels of noise. All the methods, however, show deterioration in performance which mainly caused by the squared distance term employed in models. 
\begin{figure}
\centering
    \includegraphics[width=0.9\linewidth]{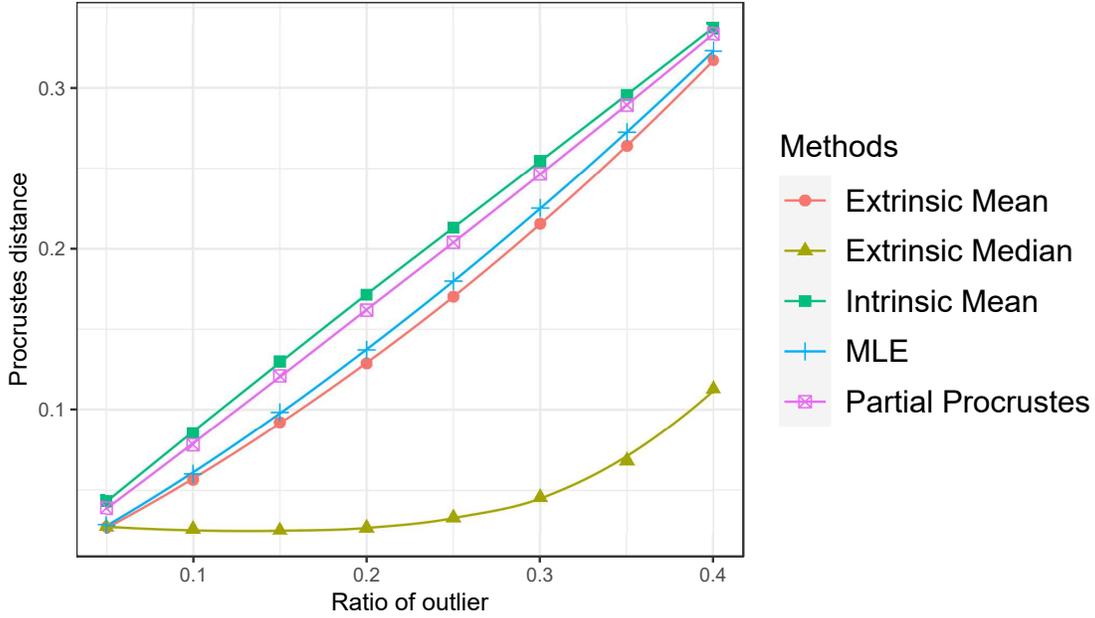}
    \caption{Graphical result of the simulation study. Line plots give the full Procrustes distances, $\rho_{FP}(\widehat{\bx}^\ast, \overline{\bx})$ for different methods as a function of the contamination level $r$.}
    \label{fig:shape_simulation}
\end{figure}



\section{Robust Extrinsic Local Regression} \label{sec:Robust_Reg}

In this section, we present the robust extrinsic local regression. To do this, we first consider a nonparametric regression model $\bY = f_0(\bX) + \bveps$ with a response $\bY$ taking value in $\mathcal{M}$, a Euclidean predictor $\bX \in \mathbb{R}^p$, and $f_0$ an unknown regression function of interest. Suppose we observe $\mathcal{D} = \{(\bx_1,\by_1), \cdots, (\bx_n,\by_n) \}$ consisting of independent and identically distributed copies of $(\bX,\bY)$.
One of the major challenges involved in developing a regression model having a manifold valued response lies in the lack of vector space structure of $\mathcal{M}$, which causes the traditional Euclidean approaches including the least square method not to be obviously applicable. For example, since linear operations are limited on $(\mathcal{M},\rho)$, evaluating the difference between the estimated value and the observed value, i.e., $\by_i - \widehat{f}(\bx_i)$, is not practical. Moreover, the geometrical feasibility of the estimation, i.e., $\widehat{f}(\bx_i) \in \mathcal{M}$, can not be guaranteed unless additional restrictions are imposed on the typical regression models. For the reasons outlined above, there has been a great demand for the development of a regression model having a manifold valued response, and a large body of literature addressing this problem has accumulated over the past two decades \citep{Sh:2009,YuZhLiMa:2012,cozhki:2017}. In particular, the extrinsic local regression (ELR) method has been initially established by \citet{LiBrThHoDu:2017}. More recently, \citet{PeMu:2019} proposed the Fr\'echet regression on general metric spaces by considering the following conditional Fr\'echet mean
\begin{align*}
    F(\px) = \argmin_{\bq \in \mathcal{M}} \int_\mathcal{M} \rho^2 (\bq,\by) \mathcal{Q}(d\by \vert \px),
\end{align*}
where $\mathcal{Q}(\by \vert \px)$ denotes the conditional distribution of $\bY$ given $\bX=\px$. Applications of the above framework is very broad as its usage is not limited to manifolds. However, despite promising progress in developing regression models for a non-Euclidean valued response, all the aforementioned methods commonly suffer from lack of robustness caused by the squared distances. To remedy this problem, we propose the robust extrinsic local regression (RELR), which can be accomplished easily by linking the extrinsic median to a classical nonparametric local kernel regression. The remainder of this section is dedicated to presenting the details of RELR, together with the proposed numerical algorithm.

We begin by introducing the following population robust extrinsic regression function, which extends the notion of the conditional median to manifolds,
\begin{align}
    F_{RE}(\px) &= \argmin_{\bq \in \mathcal{M}} \int_\mathcal{M} \Vert J(\bq) - J(\by) \Vert \mathcal{Q}(d\by \vert \px)\\
    &= \argmin_{\bq \in \mathcal{M}} \int_\mathcal{\widetilde{\mathcal{M}}} \Vert J(\bq) - \bz \Vert \widetilde{\mathcal{Q}}(d\bz \vert \px) \notag,
\end{align}
where $\widetilde{\mathcal{Q}}(\cdot \vert \px) = \mathcal{Q}(\cdot \vert \px) \circ J^{-1}$ is the induced conditional probability measure of $\bY$ given $\bX = \px$ defined on $J(\mathcal{M})$. While the proposed extrinsic approach is similar in spirit to those developed in \citet{LiBrThHoDu:2017}, our work differs in that it makes use of the unsquared extrinsic distance rather than the squared one. The unknown regression function $F(\cdot)$ can be estimated at the evaluation point $\px$ by the classical local polynomial fitting \citep{FaGi:1996}
\begin{align}
    \widehat{F}_{RE}(\px) = J^{-1} \left(\mathcal{P} \bigg(\argmin_{\py \in \mathbb{R}^D} \sum_{i=1}^n \dfrac{K_\bH(\bx_i - \px) \Vert \py-J(\by_i)\Vert}{\sum_{j=1}^n K_\bH(\bx_j - \px)}\bigg)\right).
    \label{eq:Extrinsic_Robust_Reg}
\end{align}
In the above notation, $K_\bH : \mathbb{R}^p \rightarrow \mathbb{R}$ denotes the multivariate kernel function which is defined as $K_\bH(\bu) = \frac{1}{\det{(\bH)}} K(\bH^{-1}\bu)$, where $\bu=(u_1,\cdots,u_p)^\top \in \mathbb{R}^P$, $\bH$ is a $p\times p$ symmetric and positive definite smoothing matrix, and $K: \mathbb{R}^p \rightarrow \mathbb{R}$ satisfies $\int_{\mathbb{R}^p} K(\bu)d\bu = 1, \int_{\mathbb{R}^p} \bu K(\bu)d\bu = 0$, and $\int_{\mathbb{R}^p} \bu^2K(\bu)d\bu < \infty$. Note that the case $\bH =  \operatorname{Diag}(h_1,\cdots,h_p)$ corresponds to using a product kernel obtained by multiplying $p$ univariate kernels with different bandwidths, i.e., $K_\bH(\bu) = \prod_{i=1}^p \frac{1}{h_i} {\mathbf{k}}_i\left(u_i/h_i\right)$. Regarding solving the inner optimization problem in \eqref{eq:Extrinsic_Robust_Reg}, note that since it takes the form of the weighted Fermat-Weber problem, where the weight imposed on the $i$th observation is formulated in terms of the kernel function  $w_i = K_\bH(\bx_i - \px) / \sum_{j=1}^n K_\bH(\bx_j - \px)$, the robust extrinsic local regression can be readily solved by the generalized Weiszfeld's algorithm. 

We now describe the numerical algorithm for obtaining the solution of the localized robust regression estimator. As it has been assumed in the development of the extrinsic median, the non-colinearity of the embedded responses $J(\by_1), \cdots, J(\by_n)$ is required in order to ensure the convergence of the algorithm. We also let $f(\py) = \sum_{i=1}^n w_i \Vert \py-J(\by_i) \Vert$ be the objective function, then by the strict convexity of $f$, the optimal solution is attained at the stationary point $\nabla f(\py)=\sum_{i=1}^n w_i \frac{\py - J(\by_i)}{\Vert \py - J(\by_i)\Vert}\equiv0$. Then, since the optimal $\py^\ast$ satisfies the following equation $\left( \sum_{i=1}^n w_i/\Vert \py^\ast- J(\by_i) \Vert \right) \py^\ast = \sum_{i=1}^n w_i J(\by_i)/ \Vert \py^\ast-J(\by_i)\Vert$, the iterative algorithm for updating $\py$ on the embedded space has the following form
\begin{align}
    \py^{t+1} = \left(\sum_{i=1}^n \frac{w_i}{\Vert \py^t- J(\by_i) \Vert}\right)^{-1} \sum_{i=1}^n \frac{w_i J(\by_i)}{\Vert \py^t-J(\by_i)\Vert}\label{eq:RELR_Weisz}.
\end{align}
In \Cref{al:RELR}, after a simple algebraic calculation, we can reformulate the update rule in \eqref{eq:RELR_Weisz} into the form of the gradient descent with the step size $s^t = \left(\sum_{i=1}^n w_i/\Vert \py^t- J(\by_i) \Vert\right)^{-1}$.
Once finding the optimal solution $\py^\ast$ of the inner minimization problem on the ambient space in $\mathbb{R}^D$, the final regression estimator of $\widehat{F}_{RE}(\px)$ on $\mathcal{M}$ can be straightforwardly obtained by evaluating the projection map of $\py^\ast$ onto the image of the $J(\mathcal{M})$ and taking the inverse map of the embedding. Note that in order to prevent the algorithm from getting stuck on the non optimal embedded points, $\{\py^t\}_{t\geq 0} \not\in \{J(\by_1),\cdots, J(\by_n)\}$ needs to be assumed. 
\begin{algorithm}
\caption{Robust extrinsic local regression (RELR)}
        \begin{algorithmic}[1]
            \Input{$n$ observations $\mathcal{D} = \{(\bx_1,\by_1), \cdots , (\bx_n,\by_n)\}$, evaluation point $\px$}
            \Initialize{$t=0, \py^0 \ \text{and} \ \varepsilon$}
            \While{$\Vert \py^{t+1}-\py^{t}\Vert > \varepsilon$}  
                \State Compute the gradient direction $\nabla f(\py^{t})$
                    \begin{align*}
                        \sum_{i=1}^n \frac{K_\bH(\bx_i-\px)}{\sum_{j=1}^n K_\bH(\bx_j-\px)}\frac{\py^{t}-J(\by_i)}{\Vert \py^{t}-J(\by_i)\Vert}
                    \end{align*}
                \State Compute the step size
                    \begin{align*}
                       s^{t} =  \left(\sum_{i=1}^n \frac{K_\bH(\bx_i-\px)}{\sum_{j=1}^n K_\bH(\bx_j-\px)} \middle/ \Vert \py^{t} - J(\by_i) \Vert\right)^{-1}
                    \end{align*}
                \State Update $\py^{t+1}$
                     \begin{align*}
                        \py^{t+1} = \py^{t} - s^t \cdot \nabla f(\py^{t})
                    \end{align*}
                \State{$t \leftarrow t+1$}
            \EndWhile \State \textbf{end while}
            \Output{Estimated robust estimator $\widehat{F}_{RE}(\px) = J^{-1}(\mathcal{P}(\py^\ast))$} \Comment{$\py^\ast$ the optimal value}
        \end{algorithmic}\label{al:RELR}
\end{algorithm}

We are now in a position to present the convergence analysis of the proposed algorithm. Before proceeding, the following results related to convergence of the classical Weiszfeld's algorithm should be noted.
For nonsmooth convex optimization problems, most existing gradient descent type algorithms are known to converge to the optimal solution at a rate of $\mathcal{O}(1/\sqrt{t})$, however, with the initial value proposed by \citet{VaZh:2001}, one can show that the Weiszfeld's algorithm attains a sublinear convergence rate $\mathcal{O}(1/t)$ \citep[see][]{BeSa:2015}. More surprisingly, using the smooth approximation of the object function enables the algorithm to achieve an accelerated convergence rate $\mathcal{O}(1/t^2)$. Additionally, since the proposed algorithm for solving RELR is performed on Euclidean space, similar algorithmic techniques and convergence properties previously established for the classical Weiszfelds's algorithm can be immediately utilized. 

First, we will henceforth make use of the following initial point of RELR algorithm, which is the extension of the scheme in \citet{VaZh:2001} to a nonparametric regression setting. For $p \in \argmin_{i\in \{1,\cdots,n\}} f(J(\by_i))$, we set the initial value for \Cref{al:RELR} by $\py^0 = \by_p+ t_p \bold{d}_p$, where
\begin{align*}
\begin{dcases}
    R_p &:= \sum_{i \neq p} \frac{K_\bH(\bx_i-\px)}{\sum_{j=1}^n K_\bH(\bx_j-\px)} \frac{J(\by_p)-J(\by_i)}{\Vert J(\by_i)-J(\by_p) \Vert}\\
    \bold{d}_p &= -\frac{R_p}{\Vert R_p \Vert}\\
    t_p &= \frac{\Vert R_p \Vert - K_\bH(\bx_p-\px)/\sum_{j=1}^n K_\bH(\bx_j-\px)}{L(J(\by_p))}.
\end{dcases}
\end{align*}
And the operator $L$ is given by
\begin{align*}
    L(\py) = 
    \begin{dcases}
        \sum_{i=1}^n \frac{K_\bH(\bx_i-\px)/\sum_{j=1}^n K_\bH(\bx_j-\px)}{\Vert \py - J(\by_i) \Vert}, & \py \not\in \{J(\by_1),\cdots, J(\by_n)\}\\
        \sum_{i\neq p} \frac{K_\bH(\bx_i-\px)/\sum_{j=1}^n K_\bH(\bx_j-\px)}{\Vert J(\by_p) - J(\by_i) \Vert}, & \py = J(\by_p) \ (1 \leq p \leq n) \ .
    \end{dcases}
\end{align*}
Using the above initial point, we obtain the following result. \Cref{prop:subliear} is of interest in its own right, since without any further manipulation of the algorithm, the carefully chosen starting value enables \Cref{al:RELR} to achieve the sublinear convergence rate, $\mathcal{O}(1/t)$.
\begin{prop}\label{prop:subliear}
Suppose that all embedded response values are not colinear. Then, for any $t \geq 0$, we have
\begin{align}
    f(\py^t) - f^\ast \leq \frac{L(J(\by_p)) \Vert \py^0 - \py^\ast \Vert^2 }{t \left(\Vert R_p \Vert -\frac{K_\bH(\bx_p-\px)}{\sum_{j=1}^n K_\bH(\bx_j-\px)} \right)^2} \ , \label{eq:sublinear}
\end{align}
where $f^\ast$ is the minimum of $f$.
\end{prop}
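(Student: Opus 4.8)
The plan is to read the Weiszfeld update \eqref{eq:RELR_Weisz} as a majorization--minimization step and to combine a per-iteration proximal inequality with a \emph{uniform} bound on the operator $L$ along the trajectory, the latter being exactly what the Vardi--Zhang initial point buys. Writing $w_i = K_\bH(\bx_i-\px)/\sum_{j=1}^n K_\bH(\bx_j-\px)$, so that $\sum_i w_i=1$, I would first note that the concavity of $s\mapsto\sqrt{s}$ furnishes, at any anchor $\pz\notin\{J(\by_i)\}$, a quadratic surrogate $g(\cdot\mid\pz)$ that majorizes $f$, is tangent to it ($g(\pz\mid\pz)=f(\pz)$, gradient $\nabla f(\pz)$ at $\pz$), and has constant curvature $L(\pz)$; minimizing $g(\cdot\mid\py^t)$ reproduces precisely the update \eqref{eq:RELR_Weisz}. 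Because $g(\cdot\mid\py^t)$ is the standard quadratic model $f(\py^t)+\langle\nabla f(\py^t),\cdot-\py^t\rangle+\tfrac{L(\py^t)}{2}\Vert\cdot-\py^t\Vert^2$, its $L(\py^t)$-strong convexity together with convexity of $f$ gives the fundamental inequality
\begin{align*}
f(\py^{t+1})-f^\ast \le \frac{L(\py^t)}{2}\left(\Vert\py^t-\py^\ast\Vert^2-\Vert\py^{t+1}-\py^\ast\Vert^2\right),
\end{align*}
which in particular makes $\{\Vert\py^t-\py^\ast\Vert\}$ nonincreasing and $\{f(\py^t)\}$ monotone.

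The crux is to bound $L(\py^t)$ uniformly by $\bar L := 2L(J(\by_p))/(\Vert R_p\Vert-w_p)^2$, which is the reciprocal of the guaranteed first-step decrease. I would quantify that decrease by applying the same $\sqrt{\cdot}$ majorization, now around the anchor $J(\by_p)$, to the smooth part $\sum_{i\ne p}w_i\Vert\cdot-J(\by_i)\Vert$ and adding the exact term $w_p\Vert\cdot-J(\by_p)\Vert$; this produces $f(J(\by_p)+t\bold{d}_p)-f(J(\by_p))\le -t(\Vert R_p\Vert-w_p)+\tfrac{t^2}{2}L(J(\by_p))$, whose minimizer in $t$ is exactly $t_p$, and substituting yields $f(J(\by_p))-f(\py^0)\ge (\Vert R_p\Vert-w_p)^2/(2L(J(\by_p)))=1/\bar L$. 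This is precisely why $\bold{d}_p$ and $t_p$ are defined as stated, and it is here that the relevant regime $\Vert R_p\Vert>w_p$ (i.e.\ $J(\by_p)$ is not a subdifferential zero, so the bound is non-trivial) enters.

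Next I would trade this decrease for a quantitative separation of every iterate from every anchor. For any $\py$ and index $j$, dropping the $j$-th term and using the triangle inequality gives $f(\py)\ge f(J(\by_j))-(1-w_j)\Vert\py-J(\by_j)\Vert$. Combining this with the monotonicity $f(\py^t)\le f(\py^0)$, with $f(J(\by_j))\ge f(J(\by_p))$ (since $p$ minimizes $f$ over the data points), and with the first-step bound forces $\Vert\py^t-J(\by_j)\Vert\ge 1/\bar L$ for all $t$ and $j$, whence $L(\py^t)=\sum_i w_i/\Vert\py^t-J(\by_i)\Vert\le\bar L\sum_i w_i=\bar L$. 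I would organize this as a single induction in which the separation simultaneously keeps the surrogate well defined at $\py^{t}$ (legitimizing the update and the monotone descent) and propagates to $\py^{t+1}$. Feeding $L(\py^t)\le\bar L$ into the proximal inequality, noting the distance increments are nonnegative, and telescoping over $k=0,\dots,t-1$ against monotonicity of $f$ delivers $t\,(f(\py^t)-f^\ast)\le\tfrac{\bar L}{2}\Vert\py^0-\py^\ast\Vert^2$, which is exactly \eqref{eq:sublinear}.

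The main obstacle is this uniform control of $L(\py^t)$: the operator diverges as an iterate approaches any anchor $J(\by_i)$, so the entire $\mathcal{O}(1/t)$ rate rests on showing the iterates never re-approach a data point. The carefully chosen start is what makes this tractable, since it secures a concrete first-step decrease $1/\bar L$ that the triangle-inequality lower bound on $f$ can convert into a quantitative separation; non-colinearity of the embedded responses guarantees existence and uniqueness of $\py^\ast$ and rules out the degenerate case $w_j=1$. The one delicate point to write carefully is the joint induction linking well-definedness of the surrogate, monotone descent, and the separation bound.
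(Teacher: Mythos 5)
Your proposal is correct and takes essentially the same route as the paper's proof: a uniform bound on $L(\py^t)$ along the trajectory, obtained by combining the guaranteed first-step decrease at the Vardi--Zhang initial point with a triangle-inequality separation of the iterates from the anchors $J(\by_i)$, then fed into the per-iteration proximal inequality and telescoped using Fej\'er monotonicity and the monotone descent of $f$. The only difference is presentational: where the paper cites Lemmas 5.2, 7.1, 8.1 and Corollary 3.1 of \citet{BeSa:2015}, you re-derive those ingredients inline from the majorization--minimization reading of the Weiszfeld update, which makes the argument self-contained but does not change its structure.
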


\begin{proof}
To prove the sublinear convergence rate of the proposed \Cref{al:RELR}, we have used a collection of results in \citet{BeSa:2015}.
We first need to derive the upper bound of the sequence $\{L(\py^t)\}_{t\geq 0}$, where $\{\py^t\}_{t\geq 0}$ is the sequence generated by the algorithm. For any $i=1,\cdots,n$, and $\py$ satisfying $f(\py) \leq f(\py^0)$, the following inequality $\Vert \py - J(\by_i)\Vert \geq f(J(\by_i)) - f(\py^0)$ is satisfied \citep[Lemma 8.1 in][]{BeSa:2015}. By combining this together with the monotonicity of $f(\py^t) \leq f(\py^0)$ \citep[Corollary 3.1 in][]{BeSa:2015} and $f(J(\by_p)) \leq f(J(\by_i))$, we have $\Vert \py^t - J(\by_i) \Vert \geq f(J(\by_p))-f(\py^0)$. Then by making use of the definition of the operator $L$, we obtain the following result
\begin{align}
    L(\py^t) = \sum_{i=1}^n \frac{\frac{K_\bH(\bx_i-\px)}{\sum_{j=1}^n K_\bH (\bx_j - \px)}}{\Vert \py^t-J(\by_i) \Vert} \leq
    \frac{1}{f(J(\by_p))-f(\py^0)} \leq
    \frac{2L(J(\by_p))}{\left(\Vert R_p \Vert - \frac{K_\bH(\bx_p-\px)}{\sum_{j=1}^n K_\bH (\bx_j - \px)}\right)^2}. \label{eq:L_operator}
\end{align}
In the last inequality, we have use the Lemma 7.1 in \citet{BeSa:2015} that for some $j \in \{1,\cdots,n\}$, $f(J(\by_j)) - f(J(\by_j) + t_j \bold{d}_j) \geq (\Vert R_j \Vert - K_\bH(\bx_j-\px)/\sum_{i=1}^n K_\bH (\bx_i - \px) \Vert)^2/2L(J(\by_j))$. Finally from Lemma 5.2 in \citet{BeSa:2015}, which states $f(\py^{n+1}) - f^\ast \leq L(\py^n)\left( \Vert \py^n - \py^\ast \Vert^2 - \Vert \py^{n+1} - \py^\ast \Vert^2 \right)/2$ and the Fej\'er monotonicity of the sequence generated from Weiszfeld's algorithm (i.e., $\Vert \py^{t+1} - \py \Vert \leq \Vert \py^t - \py \Vert$), the upper bound $f(\py^t)-f^\ast$ can be derived in the following manner.
\begin{align}
    \sum_{n=0}^{t-1} \left(f(\py^{n+1}) - f^\ast \right) & \leq \sum_{n=0}^{t-1} \frac{L(\py^n)}{2} \left( \Vert \py^n - \py^\ast \Vert^2 - \Vert \py^{n+1} - \py^\ast \Vert^2 \right)\notag\\
    & \leq \frac{L(J(\by_p))}{\left( \Vert R_p \Vert - \frac{K_\bH (\bx_p - \px)}{\sum_{j=1}^n K_\bH (\bx_j - \px)} \right)^2} \left( \Vert \py^0 - \py^\ast \Vert^2 - \Vert \py^t - \py^\ast \Vert^2 \right)\notag\\
    & \leq \frac{L(J(\by_p)) \Vert \py^0 - \py^\ast \Vert^2}{\left( \Vert R_p \Vert - \frac{K_\bH (\bx_p - \px)}{\sum_{j=1}^n K_\bH (\bx_j - \px)} \right)^2}.  \label{eq:sublinear_pf}
\end{align}
Since the sequence $\{ f(\py^t)\}_{t\geq0}$ is non increasing, $t (f(\py^t) - f^\ast) \leq \sum_{n=0}^{t-1} \left( f(\py^{n+1}) - f^\ast \right)$ completes the proof.
\end{proof}

Even though we have achieved the improved rate of convergence $\mathcal{O}(1/t)$, there still exists a gap in the order of the convergence rate, as compared to $\mathcal{O}(1/t^2)$ which is commonly attained by the accelerated gradient based methods for solving smooth convex optimization problems.
To further enhance the convergence rate, we here introduce the modified version of algorithm for RELR. Since the slow convergence rate is essentially caused by the inherent nonsmoothness of the objective function $f$, this can be resolved by using the smooth alternative of $f$, that always gives the optimal solution exactly the same as it is supposed to be.
Besides the improved convergence rate, what makes this approach is even more surprising is that for $t \geq 1$ the assumption $(\{\py^t\} \not\in \{J(\by_1),\cdots, J(\by_n)\})$ made on the the sequence generated by \Cref{al:RELR} is not required. Given the lack of knowledge on conditions under which the above assumption can be guaranteed, the smooth approximation method has the practical advantage. 

We finish this section by describing the derivation of the modified Weiszfeld's algorithm for RELR along with the convergence analysis. To do this, we begin by considering the following smooth function $\widetilde{f}_s(\py) : \mathbb{R}^D \rightarrow \mathbb{R}$,  
\begin{align}
\widetilde{f}_s(\py)=\sum_{i=1}^n \frac{K_\bH(\bx_i-\px)}{\sum_{j=1}^n K_\bH(\bx_j-\px)} g_{b_i}(\py-J(\by_i)),\label{eq:smooth_approx}
\end{align}
where
\begin{align}
    g_{b_i}(\py - J(\by_i)) = 
    \begin{dcases}
        \Vert \py - J(\by_i) \Vert, & \Vert \py - J(\by_i) \Vert \geq {b_i}\\
        \frac{\Vert \py - J(\by_i) \Vert^2}{2{b_i}} + \frac{{b_i}}{2}, & \Vert \py - J(\by_i) \Vert < {b_i}\ ,
    \end{dcases}\label{eq:gbi}
\end{align}
and $b_i = f(J(\by_i))- f(\py^0)$. Note first that the function $\widetilde{f}_s(\py)$ is convex and continuously differentiable over $\mathbb{R}^D$ whose gradient is Lipschitz continuous, i.e., $\Vert \nabla\widetilde{f}_s(\py)-\nabla\widetilde{f}_s(\pz)\Vert\leq L_s\Vert\py-\pz\Vert \ \forall \py,\pz \in \mathbb{R}^D$, with the Lipschitz constant 
\begin{align*}
L_s = \sum_{i=1}^n \frac{K_\bH(\bx_i - \px)/\sum_{j=1}^n K_\bH(\bx_j - \px)}{f(J(\by_i))-f(\py^0)} \ .
\end{align*}
Also observing that $g_{b_i}(\py - J(\by_i)) \geq \Vert \py - J(\by_i) \Vert$ \ $\forall \py \in \mathbb{R}^D$, it follows that we have $\widetilde{f}_s(\py) \geq f(\py)$ which indicates $\widetilde{f}_s(\py)$ plays a role of the upper bound of $f(\py)$. Moreover following from Lemma 8.1 in \citet{BeSa:2015}, $\Vert \py^\ast - J(\by_i)\Vert \geq f(J(\by_i) - f(\py^0) = b_i$ holds for $i = 1,\cdots,n$, where $\py^\ast$ is the strict global minimizer of the original objective function $f$. Then, according to the construction in \eqref{eq:gbi}, $g_{b_i}(\py^\ast - J(\by_i))=\Vert \by^\ast - J(\by_i) \Vert$ is always satisfied, which indicates $\widetilde{f}_s(\py^\ast) = f(\py^\ast) < f(\py) \leq \widetilde{f}_s(\py)$. Thus, it is clear to see that the minimizer of the inner optimization problem in \eqref{eq:Extrinsic_Robust_Reg} must also be a global minimizer of \eqref{eq:smooth_approx}, i.e., $\py^\ast=\argmin_{\py}\widetilde{f}_s(\py) = \argmin_{\py}f(\py)$. In the above sense, $\widetilde{f}_s(\py)$ allows us to smoothly approximate the original objective function $f(\py)$. Therefore, rather than directly working with $f(\py)$, we should aim to minimize $\widetilde{f}_s(\py)$, which leads to \Cref{al:fast_RELR}. 
\begin{algorithm}
\caption{Fast Weiszfeld algorithm for RELR}
        \begin{algorithmic}[1]
            \Input{$n$ observations $(X,Y) = \{(\bx_1,\by_1), \cdots , (\bx_n,\by_n)\}$, evaluation point $\px$}
            \Initialize{$s_1=1, \pu^1 = \py^0 \in \mathbb{R}^d \ \text{and} \ \varepsilon$}
            \While{$\Vert \py^{t}-\py^{t-1}\Vert > \varepsilon$} \\
            For $t = 1, 2, \cdots $
                \State Update $\displaystyle \py^t = \pu^t - \frac{1}{L_s}\nabla \widetilde{f}_s(\pu^t)$, where 

                    \begin{align*}
                        \nabla \widetilde{f}_s(\pu^t) = 
                            \sum_{i=1}^n \frac{K_\bH(\bx_i - \px)}{\sum_{j=1}^n K_\bH(\bx_j - \px)} \times
                        \begin{cases}
                            \dfrac{\pu^t - J(\by_i)}{\Vert \pu^t - J(\by_i) \Vert}, \ &\text{if} \ \Vert \pu^t - J(\by_i) \Vert \geq b_i \\\\ 
                            \dfrac{\pu^t - J(\by_i)}{b_i} , \ &\text{if} \ \Vert \pu^t - J(\by_i) \Vert < b_i
                        \end{cases}
                    \end{align*}                    
        \State Update $\displaystyle s_{t+1} = \frac{1 + \sqrt{1 + 4s_t^2}}{2}$ \vspace{.4cm}
        \State Update $\displaystyle \pu^{t+1} = \py^t + \left(\frac{s^t-1}{s^{t+1}}\right) (\py^t - \py^{t-1}) $

            \EndWhile \State \textbf{end while}
            \Output{Estimated robust estimator $\widehat{F}_{RE}(\px) = J^{-1}(\mathcal{P}(\py^\ast))$} \Comment{$y^\ast$ optimal value}
        \end{algorithmic}\label{al:fast_RELR}
\end{algorithm}

Now we let $\{\py^t\}_{t\geq0}$ be the sequence generated by \Cref{al:fast_RELR}, then we obtain
\begin{align}
    \widetilde{f}_s(\py^t) - f^\ast \leq \frac{2  L_s \Vert \py^0 - \py^\ast \Vert^2}{(t+1)^2}.
    \label{eq:fast_Weiszfeld}
\end{align}
The above convergence result follows immediately from Theorem 9.1 in \citet{BeSa:2015}, and see \citet{BeTe:2009} for a proof. Furthermore, in our RELR setting, $L_s$ is bounded from above by
\begin{align*}
    L_s &= \sum_{i=1}^n \frac{K_\bH(\bx_i - \px)/\sum_{j=1}^n K_\bH(\bx_j - \px)}{f(J(\by_i))-f(\py^0)} 
    \leq \sum_{i=1}^n \frac{K_\bH(\bx_i - \px)/\sum_{j=1}^n K_\bH(\bx_j - \px)}{f(J(\by_p))-f(\py^0)} \\ 
    &=  \frac{1}{f(J(\by_p))-f(\py^0)} \leq \frac{2L(J(\by_p))}{ \left( \Vert R_p \Vert - \frac{K_\bH(\bx_p-\px)}{\sum_{j=1}^n K_\bH(\bx_j-\px)} \right)^2} \ ,
\end{align*}
where the last inequality uses \eqref{eq:L_operator}.
Putting all pieces together and using $\widetilde{f}_s(\py)-f^\ast \geq f(\py)-f^\ast$, we see that the fast Weiszfeld algorithm for RELR attains the following convergence rate of $\mathcal{O}(1/t^2)$ :
\begin{align*}
    f(\py^t) - f^\ast \leq  \frac{4 \Vert \py^0 - \py^\ast \Vert^2L(J(\by_p))  }{\left\{(t+1) \left( \Vert R_p \Vert - \frac{K_\bH(\bx_p-\px)}{\sum_{j=1}^n K_\bH(\bx_j-\px)} \right) \right\}^2} \ ,
\end{align*}
which is a substantial improvement on the convergence rate in \eqref{eq:sublinear}.


\section{Application of RELR to the Planar Shape Space}\label{sec:simulation}
In this section, the benefit of the robust extrinsic local regression over the extrinsic regression is demonstrated on the basis of simulation studies. While simulation conducted in this paper has focused only on the case of the planar shape, the proposed method can be applied to other manifolds in a straight forward manner. Recalling from \Cref{sec:app_planarshape}, we let $\bY = (z_1, \cdots ,z_k)$ be the response variable which is a planar shape of $k$-ads defined on $\Sigma_2^k$, and $\bX \in \mathbb{R}^p$ is an Euclidean predictor. By slightly modifying the polar coordinate based scheme in \citet{LiBrThHoDu:2017}, we generate synthetic planar shape data in the following manner : 
\begin{allowdisplaybreaks}
\begin{flalign*}
    & \textbf{Generate Covariate :} \ \bX = (X_1, \cdots, X_p) \ , \ \text{where} \ X_i  \sim \text{Uniform}(a,b) \\
    & \textbf{Coefficient :} \  \bbeta = (\beta_1,\cdots\beta_k) = (1/k^2,\cdots,k/k^2) \in \mathbb{R}^k\\
    & \textbf{Generate Intercept angles :}\ {\boldsymbol{\phi_{0}}}  = ({\boldsymbol{\phi_{0}}}_1,\cdots,{\boldsymbol{\phi_{0}}}_k) = (1/2,\cdots,k/2) \in \mathbb{R}^k \\
    & \textbf{Generate Intercept radius :}\ {\boldsymbol{\gamma_{0}}} = ({\boldsymbol{\gamma_{0}}}_1,\cdots,{\boldsymbol{\gamma_{0}}}_k) = (0.1,\cdots,0.1) \in \mathbb{R}^k\\
    & \textbf{Generate Shape angles :}\ \phi_{j}^\prime \sim \operatorname{Normal}\left({\boldsymbol{\phi_{0}}}_j + \beta_j\sum_{i=1}^p  X_{i}, \sigma_\phi^2\right)\\
    & \textbf{Standardize angles :}\ {\boldsymbol{\phi}} = (\phi_1,\cdots\phi_k), \text{where} \ \phi_j = {\phi_j^\prime}\hspace{-.3cm}\pmod{2\pi}\\
    & \textbf{Generate Shape radius :}\ {\boldsymbol{\gamma}}=(\gamma_1,\cdots,\gamma_k), \ \text{where} \  \gamma_{j} \sim \operatorname{Normal}\left({\boldsymbol{\gamma_{0}}}_j + \beta_j\sum_{i=1}^p  X_{i}, \sigma_\gamma^2\right)\\
    & \textbf{Convert to complex form for the landmark} : 
            z_j = \gamma_j(\cos(\phi_j) + i \sin(\phi_j)).
\end{flalign*}
\end{allowdisplaybreaks}
Further, in order to conduct simulation studies under the outlier contaminated setting, we randomly add fixed number of outliers to the response variable, i.e.,  $\bY^\ast = \bY + {\bf{\Psi}} = (z_1^\ast, \cdots ,z_k^\ast)$. The extreme value of outlier ${\bf\Psi} \in \mathbb{C}^k$ was generated from the $k$-dimensional complex normal distribution, $\mathbb{C}N(\bmu,\bf{\Gamma})$, where $\bmu = E({\bf\Psi})$ and ${\bf\Gamma} = E\left(({\bf\Psi}-\bmu)({\bf\Psi}-\bmu)^H\right)$.
We note that since the contaminated response $\bY^\ast$ is no longer an element in $\Sigma_2^k$, both translation and scale effects have to be filtered out. To help understand the process by which data are generated, the representative illustration of the simulated data is presented on the left panel of \Cref{fig:sim_shape_reg}. For illustrative purpose, planar shape data is generated under the univariate setting, and only the first landmark is contaminated. In the right panel of the same figure, the estimated curves are given, along with true underlying function. As shown in the figure, the curve obtained from the usual ELR (red) substantially deviates from the true curve (blue), while the curve corresponding to the RELR (green) is almost overlapped with the supposed true value.

\begin{figure}[h]
\centering
    \includegraphics[width=0.48\linewidth]{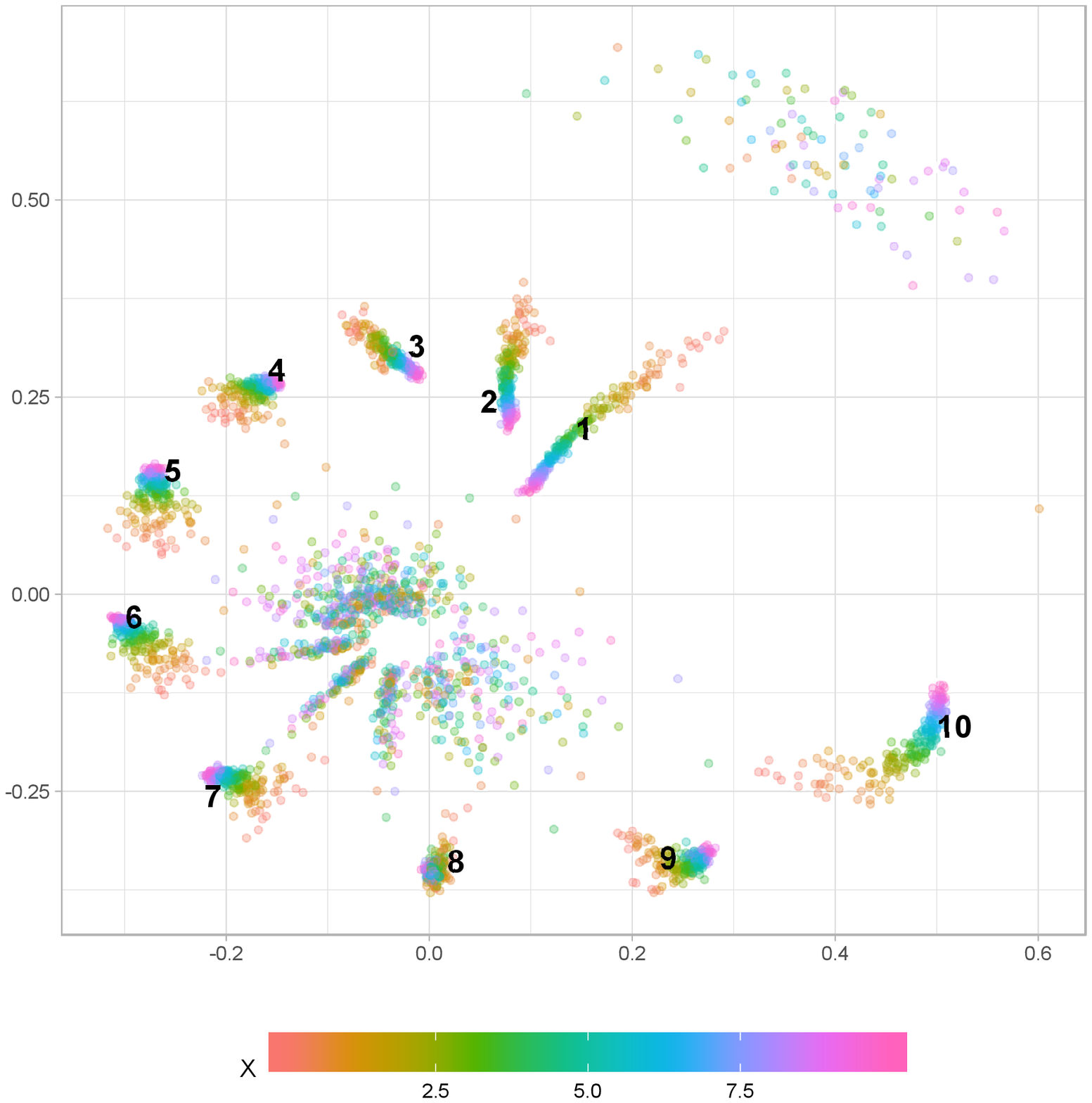}
    \includegraphics[width=0.48\linewidth]{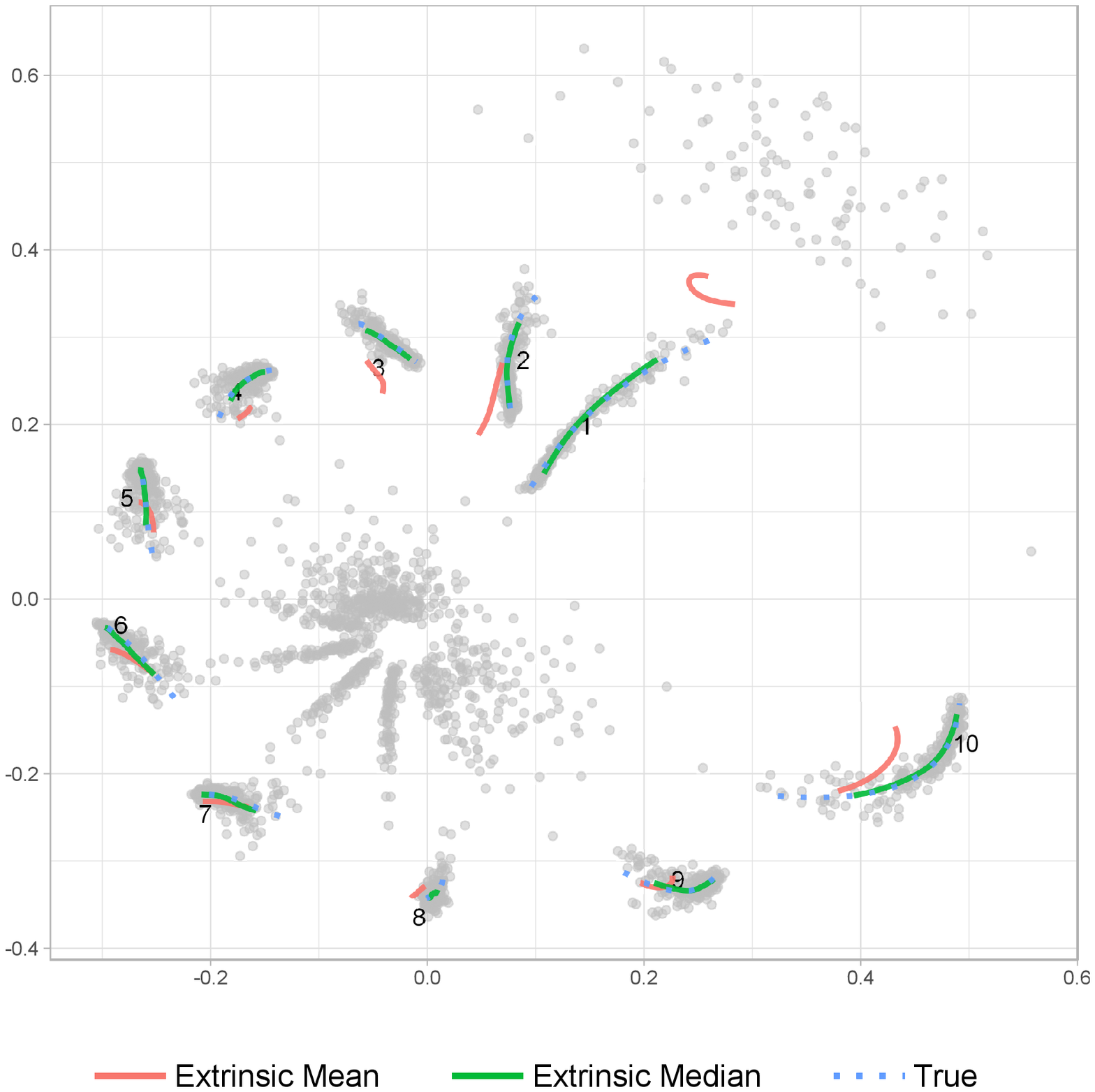}
    \caption{Left : the example of the simulated data (univariate case) with $r = 0.2$, each point is colored according to the value of the predictor $X_1$. Right : The result of estimations. The optimal bandwidths of RELR and ELR, selected by the 5-fold cross validation are $h_{\text{Med}} = 1.37$ and $h_{\text{Mean}} = 2.27$, respectively.}
    \label{fig:sim_shape_reg}
\end{figure}

One important remaining issue of the proposed RELR model that has not been highlighted in the previous section is the bandwidth selection. It is well known that the performance of the local polynomial type method significantly relies on a tuning parameter $h$, called the bandwidth which plays a crucial role in controlling the degree of smoothing. To be specific, large $h$ value leads to a smooth estimation, but by failing to account for a local variation it may introduce a significant estimation bias, whereas small $h$ produces a jagged estimation, leading to a large variance. Thus it should be properly selected to balance the trade-off between variance and squared bias. Throughout the simulation, we consider the smoothing matrix that gives the same bandwidth in all $p$ dimensions, i.e., $\bH = h {\bf{I}}_p$. Though bandwidth selection methods have been extensively studied in the early days of nonparametric regression modeling, in this paper we adopt 5-fold cross validation for the sake of simplicity.

The performance of the RELR is evaluated by comparing the results with those achieved by ELR in terms of two different measures associated with the full Procrustes distance, $\rho_{\text{FP}} = \sqrt{\left(1- \vert \langle \bz_1, \bz_2 \rangle \vert^2 \right)}$, where $\bz_1,\bz_2 \in \Sigma_2^k$. Firstly, we consider $\operatorname{MD}_{\text{obs}} = \sum_{i=1}^n \rho_{\text{FP}}(\by_i , \widehat{f}(\bx_i))/n$, that measures difference between the estimated value $\widehat{f}(\bx_i)$ and the observed value $\by_i$. Moreover, to assess whether the estimator has the benefit of capturing the true signal, it is more appropriate to examine the following root mean squared error like measure $\operatorname{RMSE}_{\text{true}} = \sqrt{\sum_{i=1}^n \rho_{\text{FP}}(f_0(\bx_i) , \widehat{f}(\bx_i))^2/n}$, which quantifies the difference between the predictor and the true value $f_0(\bx_i)$. To investigate how methods are affected by outliers, the contamination level $r$ were varied on an evenly spaced grid over $[0,0.3]$. The values (averaged over 20 replications per each setting) of $\operatorname{MD}_{\text{obs}}$ and $\operatorname{RMSE}_{\text{true}}$ with $n = 200, p=1$, are presented in \Cref{fig:sim_shape_rmse} (left and right panel, respectively). 
\begin{figure}
\centering
    \includegraphics[width=0.9\linewidth]{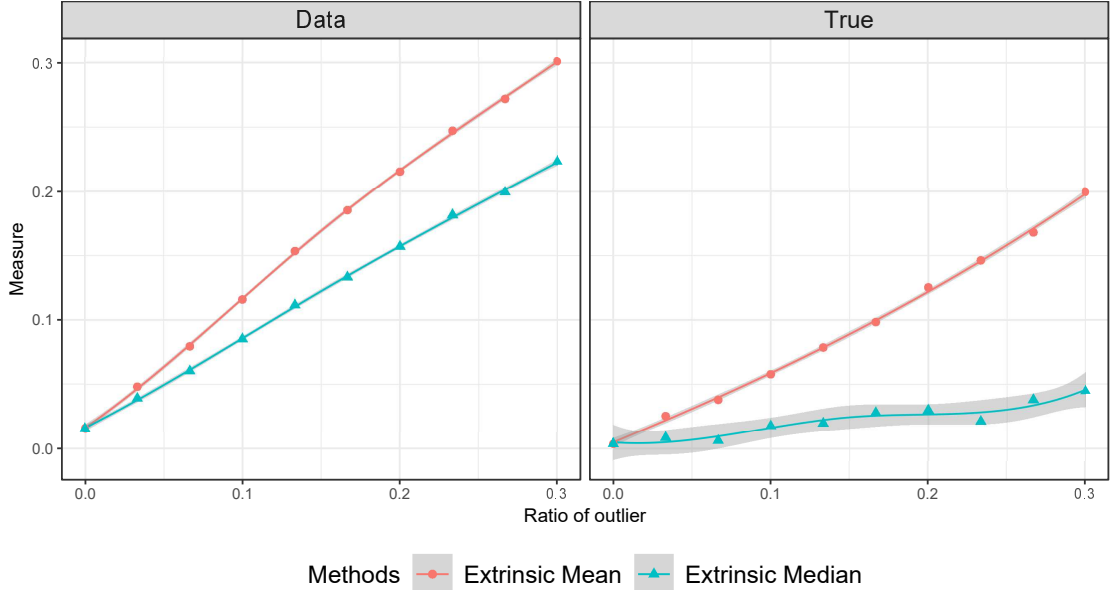}
    \caption{Results of the univariate case, as a function of the contamination level $r = [0,0.3]$. Left : Averaged value of $\operatorname{MD}_{\text{obs}}$. 
    Right: Averaged value of RMSE against underlying true regression function $f_0(x_i)$. To assist in the visualization, we also give a loess smooth with Monte-Carlo
standard error.}
    \label{fig:sim_shape_rmse}
\end{figure}

Overall, the performance curves, obtained across the experimental conditions, are roughly linear and slope upward from left to right as contamination rate runs from 0 to 0.3, which indicates performances of both RELR and ELR degrade as $r$ increases. We also have found that, values of both $\operatorname{MD}_{\text{obs}}$ and $\operatorname{RMSE}_{\text{true}}$ corresponding to RELR are consistently lower than those of ELR, except for $r=0$. The result from the left panel makes intuitive sense, as RELR minimizes the empirical risk function associated with the unsquared extrinsic distance which only differs from the full Procrustes distance by a multiplicative constant. In the right panel, however, the value of the slope obtained from ELR, inclined toward the upper right point $(0.3, 0.2)$ starting from the origin, is 1.67 which is approximately four times greater than that of RELR. This clearly suggests that RELR outperforms ELR by a wide margin in identifying true patterns of shape changes that are masked by noise. As shown in \Cref{tab:sim_shape_reg}, similar results were obtained from further experiments in a multivariate setting ($p=3$) with different samples sizes. In this simulation study, the proposed RELR is seen to universally perform very well across all examined simulation scenarios. However, contrary to RELR, it appears that the performance of ELR is prone to be adversely affected by the presence of outliers and noises, which is consistent and in line with what we would have been expected from the previous location estimation performed in \Cref{sec:app_planarshape}.

\begin{table}
    \centering
    \begin{tabular}{lllllllllllllllll}
    \toprule
         & & & \multicolumn{4}{c}{Ratio of outliler}\\
        \cmidrule(lr){4-8}
         $N$ & Measures & Methods & 0 & 0.05 & 0.1 & 0.2 & 0.3\\
        \midrule
        \multirow{4}{*}{50}
        &\multirow{2}{*}{$\operatorname{MD}_{\text{obs}}$}
        & ELR &  0.0323 & 0.0716 & 0.1389 & 0.2481 & 0.3505 \\ 
     & & RELR  & 0.0320 & 0.0662 & 0.1192 & 0.2052 & 0.2806\\
        \cmidrule(lr){4-8}
         &\multirow{2}{*}{$\operatorname{RMSE}_{\text{true}}$} 
        & ELR & 0.0233 & 0.0348 & 0.0692 & 0.1446 & 0.2326 \\ 
         & & RELR  & 0.0242 & 0.0257 & 0.0400 & 0.0669 &  0.0781 \\
        \cmidrule(lr){2-8}
        \multirow{4}{*}{100}
        &\multirow{2}{*}{$\operatorname{MD}_{\text{obs}}$}
        & ELR & 0.0333 & 0.0810 & 0.1306 & 0.2441 & 0.3524 \\ 
         & & RELR & 0.0338 & 0.0758 & 0.1200 & 0.2098 & 0.2884\\
        \cmidrule(lr){4-8}
         &\multirow{2}{*}{$\operatorname{RMSE}_{\text{true}}$} 
        & ELR & 0.0242 & 0.0358 & 0.0550 & 0.1279 & 0.2219 \\ 
         & & RELR  & 0.0255 & 0.0274 & 0.0414 & 0.0617 & 0.0698\\
        \cmidrule(lr){2-8}
         \multirow{4}{*}{200}
        &\multirow{2}{*}{$\operatorname{MD}_{\text{obs}}$}
        & ELR &  0.0336 & 0.0802 & 0.1351 & 0.2486 & 0.3516\\ 
         & & RELR  & 0.0341 & 0.0768 & 0.1235 & 0.2126 & 0.2880 \\
        \cmidrule(lr){4-8}
         &\multirow{2}{*}{$\operatorname{RMSE}_{\text{true}}$} 
        & ELR & 0.0237 & 0.0327 & 0.0566 & 0.1226 & 0.2093 \\ 
         & & RELR  & 0.0248 & 0.0277 & 0.0386 & 0.0586 & 0.0720 \\
        \bottomrule
    \end{tabular}
    \caption{Comparisons of RELR and ELR in terms of $\operatorname{MD}_{\text{obs}}$ and $\operatorname{RMSE}_{\text{true}}$.}
    \label{tab:sim_shape_reg}
\end{table}
\section{Discussion}\label{sec:conclusion}

In this paper, we have proposed the robust statistical methods on manifolds and demonstrated that when outliers exist in the dataset, they are capable of achieving considerable improvements over existing methods. Building upon the idea of geometric median and the extrinsic framework, our method takes the full advantage of what the two approaches can provide. (i) the robustness property is attained by employing the unsquared extrinsic distance induced by the Euclidean embedding, which prevents the estimator from amplifying the effects of noise and outliers. (ii) Our approach also can be universally adapted to any manifold, on which the proper Euclidean embedding is available. For example, the RELR can be straight-forwardly implemented into the space of $p\times p$ symmetric positive definite (SPD) matrices, which especially emerged as the form of data in neuro imaging. To be specific, $3\times3$ SPD matrices have arisen as data elements in diffusion tensor imaging (DTI). In this space, the equivariant embedding is given by the Riemannian logarithm map from the SPD matrices to symmetric matrices, $\log : \text{SPD}(3) \rightarrow \text{Sym}(3);\bX = \bU {\bf\Lambda} \bU^{-1} \in \text{SPD}(3) \mapsto \text{log}(\bX) = \bU \log{({\bf\Lambda})}\bU^{-1}$, and the extrinsic distance is defined as the Frobenius norm, i.e., $\rho_E(\bX_1,\bX_2) = \Vert \log{(\bX_1)} - \log{(\bX_2)} \Vert_F$.

While our main focus was on developing robust statistical methods for estimating the central location and regression problem on manifolds, some important problems still remain to be further investigated as part of future work. Now, we end the paper by outlining some promising future directions for research that may immediately benefit from our proposed framework. First, borrowing ideas from the method of $K$-medians algorithms \citep{CaCeMo:2012}, clustering on manifolds can be carried out in a more robust manner. Second, an intriguing research question raised by our study is the possible extension of the notion of the quantile to manifolds. Unlike a measure of the central tendency (mean and median), the generalization of the quantile on non-Euclidean space is nontrivial, because it has to take into account the direction and magnitude of changes from the central location. To the best of our knowledge, there has been no reported research in this context yet, which may be in part due to the difficulty in defining the direction on the surface of manifolds. The challenge above may be tackled by utilizing the geometric quantile \citep{Ch:1996}, which has the geometric median as a special case and our extrinsic framework. We expect that it will be especially useful in medical imaging analysis. Because using the quantile information enables us to quantify pathological state or the abnormality of a certain organ shape, it will be a useful tool for diagnosing and prognosticating critically ill patients

\bibliographystyle{apa}
\bibliography{Robust_Extrinsic}

\end{document}